\documentclass{article}
%-------------------------------------------
\usepackage{amsmath}
\usepackage{amsfonts}
\usepackage{amssymb}
\usepackage{caption,subcaption}
\usepackage{graphicx}
\usepackage{fullpage}
\usepackage{amsthm}
\usepackage{mathtools,thmtools}
\usepackage{color}
\usepackage{blkarray}
\usepackage{hyperref}

\theoremstyle{definition}
\newtheorem{theorem}{Theorem}

\newtheorem{definition}{Definition}
\newtheorem{example}[definition]{Example}

\newtheorem{lemma}[theorem]{Lemma}

\newtheorem{remark}[definition]{Remark}

\newcommand{\eproj}{\mathrm{Proj}}

\newcommand{\e}{\mathrm{e}}
\renewcommand{\S}{S}
\newcommand{\R}{\mathcal{R}}
\newcommand{\po}{\mathrm{Poisson}}
\renewcommand{\|}{\,||\,}

\newenvironment{bullets}%
        {\begin{list}
                {\noindent\makebox[0mm][r]{$\bullet$}}
                {\leftmargin=5.5ex \usecounter{enumi}
 		 \topsep=1.5mm \itemsep=-.75ex}
        }
        {\end{list}}

\title{A stochastic molecular scheme for an artificial cell to infer its environment from partial observations}

\author{Muppirala Viswa Virinchi \and  Abhishek Behera \and Manoj Gopalkrishnan}

\date{India Institute of Technology Bombay, Mumbai, India \\
{$\{$axlevisu, abhishek.behera.iitm, manoj.gopalkrishnan$\}$@gmail.com}\ 
\\April 1, 2017}

\begin{document}
\maketitle

\begin{abstract}
The notion of entropy is shared between statistics and thermodynamics, and is fundamental to both disciplines. This makes statistical problems particularly suitable for reaction network implementations. In this paper we show how to perform a statistical operation known as Information Projection or E projection with stochastic mass-action kinetics. Our scheme encodes desired conditional distributions as the equilibrium distributions of  reaction systems. To our knowledge this is a first scheme to exploit the inherent stochasticity of reaction networks for information processing. We apply this to the problem of an artificial cell trying to infer its environment from partial observations.
\end{abstract}

\section{Introduction}\label{S:1}
Biological cells function in environments of high complexity. Transmembrane receptors allow a cell to sample the state of its environment, following which biochemical reaction networks integrate this information, and compute decision rules which allow the cell to respond in sophisticated ways. One challenge is that receptors may be imperfectly specific, binding to multiple ligands with various propensities. What algorithmic and statistical ideas are needed to deal with this challenge, and how would these ideas be implemented with reaction networks? These are the questions we begin to address here. The two questions do not decouple because the attractiveness of algorithmic and statistical ideas towards these challenges is tied in with their ease of implementation with reaction networks. We are interested in statistical algorithms that fully exploit the native dynamics and stochasticity of reaction networks. To fix ideas, let us consider an example.
\begin{figure}[h]
\centering
\begin{subfigure}{.5\textwidth}
\centering
\includegraphics[width=0.9\linewidth]{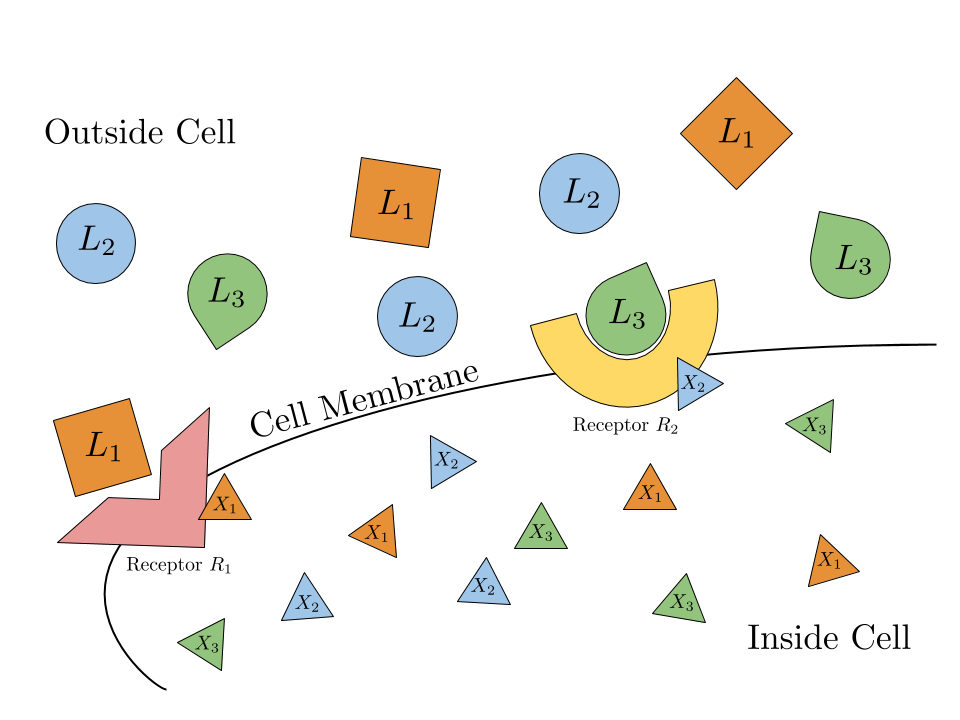}
\end{subfigure}%
\begin{subfigure}{.5\textwidth}
\centering
\includegraphics[width=0.9\linewidth]{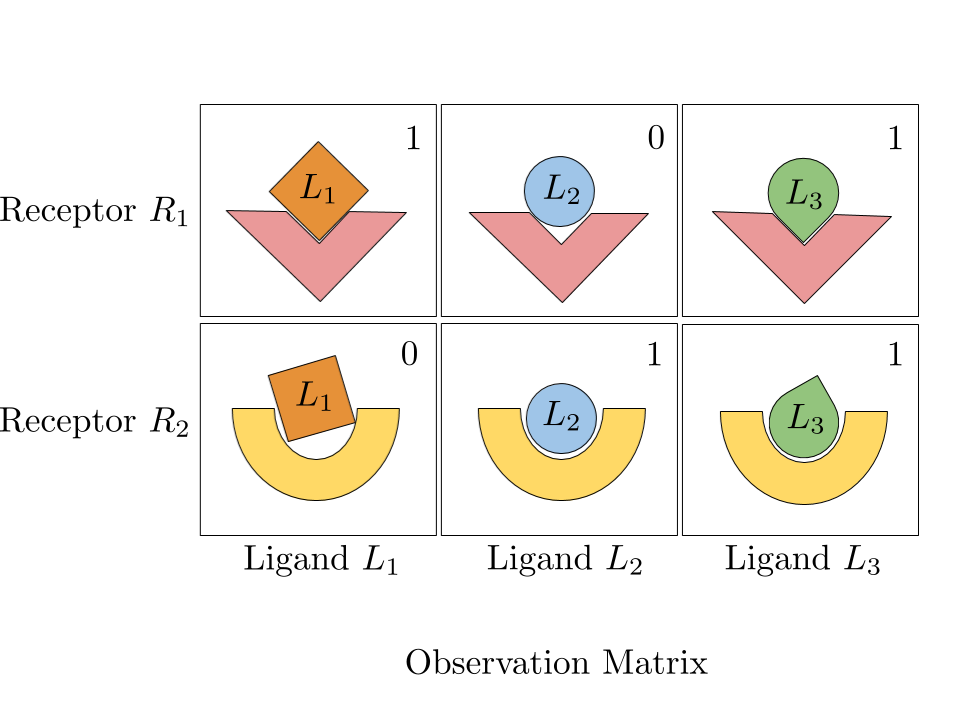}
\end{subfigure}
\caption{\label{fig:1}An artificial cell with two transmembrane receptors $R_1$ and $R_2$ and extracellular ligands $L_1,L_2,L_3$. $R_1$ has equal affinity to both $L_1$ and $L_3$. $R_2$ has equal affinity to both $L_2$ and $L_3$.}
\end{figure}
\begin{example}\label{ex:run}
Consider an artificial cell with two types of transmembrane receptors $R_1$ and $R_2$ in an environment with three ligand species $L_1, L_2$, and $L_3$ (Figure~\ref{fig:1}). Receptor $R_1$ has equal affinity to ligands $L_1$ and $L_3$, and no affinity to $L_2$. Receptor $R_2$ has equal affinity to ligands $L_2$ and $L_3$, and no affinity to $L_1$. This information can be summarized in an \textbf{observation matrix} 
\[
O=\begin{blockarray}{cccc}
&L_1 &L_2 &L_3\\
\begin{block}{c(ccc)}
R_1  &1 &0 &1\\
R_2  &0 &1 &1\\
\end{block}
\end{blockarray}
\]

The question of interest is how to design a cytoplasmic chemical reaction network to estimate the numbers $l_1,l_2,l_3$ of the ligands from receptor binding information. We assume that a prior probability distribution over ligand states $(l_1,l_2,l_3)\in\mathbb{Z}_{\geq 0}^3$ is given. We further assume that this prior probability distribution is a product of Poisson distributions specified by given Poisson rate parameters $q_1,q_2,q_3\in\mathbb{R}_{>0}$ respectively. Lemma~\ref{lem:productpoisson} provides intuition for the product-Poisson assumption. The following questions concern us.
\begin{enumerate}
\item\label{Q1} Given information on the exact numbers $r_1$ and $r_2$ of binding events of receptors $R_1$ and $R_2$, obtain samples over populations $(l_1,l_2,l_3)$ of the ligand species according to the Bayesian posterior distribution ${\Pr[(l_1,l_2,l_3)\mid (r_1,r_2,\po(q_1,q_2,q_3))]}$. 

\item\label{Q2} Given information on the average numbers $\langle r_1\rangle$ and $\langle r_2\rangle$ of binding events of receptors $R_1$ and $R_2$, obtain samples  over populations $(l_1,l_2,l_3)$ of the ligand species according to the Bayesian posterior distribution ${\Pr[(l_1,l_2,l_3)\mid (\langle r_1\rangle, \langle r_2\rangle, \po(q_1,q_2,q_3))]}$.
\end{enumerate}
\end{example}

We investigate these questions for arbitrary numbers of receptors and ligands, arbitrary observation matrices $O$, and arbitrary product-Poisson rate parameters $q$, and make the following new contributions:
\begin{bullets}
\item In Section~\ref{sec:probstmt}, we precisely state our question in the general setting. In Section~\ref{sec:ex}, we illustrate our main ideas on Example~\ref{ex:run}.

\item In Section~\ref{sec:scheme}, we describe a reaction network scheme $\eproj$ that takes as input an observation matrix $O$ and outputs a prime chemical reaction network. Our proposed reaction networks have the following merits that make them promising candidates for molecular implementation. Implementing the reactions requires only thermodynamic control and not kinetic control because the reaction rate constants need only be specified upto the equilibrium constant for the reactions (Remark~\ref{rmk:rates}). Our scheme avoids catalysis, and so is robust to ``leak reaction'' situations~\cite{seesawgates} (Remark~\ref{rmk:prime}).

\item In Section~\ref{sec:q1}, we address Question~\ref{Q1}. We show that for each fixed $O$ and $q$, when the chemical reaction system is initialized as prescribed according to the numbers $r_i$ of binding events of receptors, and allowed to evolve according to stochastic mass-action kinetics, then the system evolves towards the desired Bayesian posterior distribution (Theorem~\ref{thm:canonical}).

\item In Section~\ref{sec:q2}, we address Question~\ref{Q2}. We show that for each fixed $O$ and $q$, when the chemical reaction system is initialized as prescribed according to the average numbers $\langle r_i\rangle$ of binding events of receptors, and allowed to evolve according to deterministic mass-action kinetics, then the distribution of unit-volume aliquots of the system evolves towards the desired Bayesian posterior distribution (Theorem~\ref{thm:grandcanonical}).

\item We do a literature review in Section~\ref{sec:rw}, comparing our scheme with other reaction network schemes that process information. Exploiting inherent stochasticity and free energy minimization appear to be the two key new ideas in our scheme. 

\item In Section~\ref{sec:cfw}, we discuss limitations and directions for future work, including a reaction scheme for the expectation-maximization algorithm, which is a commonly used algorithm in machine learning and may be a more sophisticated way for an artificial cell to infer its environment from partial observations.
\end{bullets}

\section{Background}\label{sec:rntrev}
\subsection{Probability and Statistics}
For $n\in\mathbb{Z}_{> 0}$, following~\cite{Amari2016}, KL Divergence $D: \mathbb{R}_{\geq 0}^{n} \times \mathbb{R}_{\geq 0}^{n} \rightarrow \mathbb{R}$ is the function 
\[
D(x\|y)\coloneqq \sum_{i=1}^{n}{x_{i}\log (\frac{x_{i}}{y_{i}}) -x_{i} + y_{i}}
\] 
with the convention $0\log 0 = 0$  and for $p>0$, $p\log 0 = -\infty$. If $x,y$ are probability distributions then $\sum_{i=1}^{n} -x_{i} + y_{i} = 0$ and KL Divergence is the same as relative entropy $\sum_{i=1}^{n}{x_{i}\log \left(\frac{x_{i}}{y_{i}}\right)}$. When the index $i$ takes values over a countably infinite set, we define KL Divergence by the same formal sum as above, and understand it to be well-defined whenever the infinite sum converges in $[0,\infty]$. For $x\in\mathbb{R}^k_{>0}$, by  $\po(x)$ we mean $\Pr[n_1,n_2,\dots,n_k \mid x] = \prod_{i=1}^k \e^{-x_i} \frac{x_i^{n_i}}{n_i!}$. The following lemma is well-known and easy to show.
\begin{lemma}\label{thm:poisdiv}
$D(\po(x)\|\po(y))=D(x\|y)$ for all $x,y \in \mathbb{R}_{>0}^{k}$.
\end{lemma}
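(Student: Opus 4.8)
The plan is to expand both sides directly from the definition and exploit the product structure of $\po(x)$. Since $\po(x)$ and $\po(y)$ are genuine probability distributions on $\mathbb{Z}_{\geq 0}^k$, the divergence collapses to the relative-entropy form $D(\po(x)\|\po(y)) = \sum_{n\in\mathbb{Z}_{\geq 0}^k} \po(x)(n)\,\log\frac{\po(x)(n)}{\po(y)(n)}$. Taking the logarithm of the product formula for $\po(\cdot)$, the factorials cancel and the log-ratio decomposes coordinatewise: $\log\frac{\po(x)(n)}{\po(y)(n)} = \sum_{i=1}^k\bigl(y_i - x_i + n_i\log(x_i/y_i)\bigr)$, which is a finite real number for every $n$ because $x,y\in\mathbb{R}_{>0}^k$.

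Next I would push the sum over $n$ through this decomposition. Because $\po(x)$ is a product measure, summing $\po(x)(n)$ against a constant returns that constant, and summing $\po(x)(n)\,n_i$ over all $n$ leaves only the $i$-th Poisson marginal, whose mean is $x_i$. Hence, term by term, $D(\po(x)\|\po(y)) = \sum_{i=1}^k\bigl(y_i - x_i + x_i\log(x_i/y_i)\bigr) = \sum_{i=1}^k\bigl(x_i\log(x_i/y_i) - x_i + y_i\bigr) = D(x\|y)$, which is exactly the claim.

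The only point needing care is the legitimacy of interchanging the doubly infinite summation with the coordinatewise decomposition. I would justify this by absolute convergence: for each $i$, $\sum_n \po(x)(n)\,\bigl|y_i - x_i + n_i\log(x_i/y_i)\bigr| \le |y_i - x_i| + x_i\,|\log(x_i/y_i)| < \infty$, since a $\mathrm{Poisson}(x_i)$ variable has finite mean $x_i$. Thus Fubini--Tonelli applies, the rearrangement is valid, and in particular the defining sum for $D(\po(x)\|\po(y))$ converges in $[0,\infty]$ as required by the conventions fixed in the text. This bookkeeping is the only ``obstacle,'' and it is a mild one — consistent with the lemma being described as well-known and easy; no genuine difficulty arises.
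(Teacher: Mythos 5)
Your proof is correct: the log-ratio of the product-Poisson densities decomposes coordinatewise with the factorials cancelling, and summing against $\po(x)$ replaces $n_i$ by the Poisson mean $x_i$, giving exactly $\sum_i\bigl(x_i\log(x_i/y_i)-x_i+y_i\bigr)=D(x\|y)$; your absolute-convergence check via the finite Poisson mean is the right justification for the rearrangement. The paper itself omits the proof, labelling the lemma ``well-known and easy to show,'' and your computation is precisely the standard argument one would supply, so there is no divergence in approach to report.
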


The Exponential-Projection or \textbf{E-Projection}~\cite{Amari2016} (or Information-Projection or I-Projection~\cite{csiszar2004information}) of a probability distribution $q$ onto a set of distributions $P$ is $p^*=\arg\min_{p\in P} D(p\|q)$. The Mixture-Projection or \textbf{M-Projection} (or reverse I-projection) of a probability distribution $p$ onto a set of distributions $Q$ is $q^*=\arg\min_{q\in Q} D(p\|q)$.

\subsection{Reaction Networks}
We recall notation, definitions, and results from reaction network 
theory~\cite{feinberg72chemical,horn72necessary,Fein79,Manoj_2011Catalysis,anderson2010product}. For $x,y\in\mathbb{R}^k$, by $x^y$ we mean $\prod_{i=1}^kx_i^{y_i}$, and by $\e^x$ we mean $\prod_{i=1}^k\e^{x_i}$. For $m\in\mathbb{Z}_{\ge 0}^k$, by $m!$ we mean $\prod_{i=1}^k m_i!$.

Fix a finite set $S$ of \textbf{species}. By a \textbf{reaction} we mean a formal chemical equation
\[
	\sum_{i\in S} y_i X_i \rightarrow \sum_{i\in S}y_i'X_i
\]
where the numbers $y_i,y_i'\in\mathbb{Z}_{\geq 0}$ are the \textbf{stoichiometric coefficients}. This reaction is also 
written as $y\to y'$ where $y,y'\in\mathbb{Z}^{\S}_{\geq 0}$. A \textbf{reaction network} is a pair $(\S,\R)$ where $S$ 
is finite, and $\R$ is a finite set of reactions. It is \textbf{reversible} iff for every reaction $y\to y'\in\R$, the 
reaction $y'\to y\in\R$. Fix $n,n'\in\mathbb{Z}^S_{\geq 0}$. We say that $n\mapsto_\R n'$, read $n$ \textbf{maps to} $n'$ 
iff there exists a reaction $y\to y' \in\R$ with $y_i\leq n_i$ for all $i\in S$ and $n' = n + y'-y$. We say that $n
\Rightarrow_\R n'$, or in words that $n'$ is $\R$-\textbf{reachable} from $n$, iff there exist a nonnegative integer $k
\in\mathbb{Z}_{\geq 0}$ and $n(1),n(2),\dots,n(k)\in\mathbb{Z}^\S_{\geq 0}$ such that $n(1)=n$ and $n(k)=n'$ and for $i=1$
to $k-1$, we have $n(i)\mapsto_\R n(i+1)$. A reaction network $(\S,\R)$ is \textbf{weakly reversible} iff for every reaction $y\to y'\in\R$, we have $y'\Rightarrow y$. Trivially, every reversible reaction network is weakly reversible. The \textbf{reachability class} of $n_0\in\mathbb{Z}^\S_{\geq 0}$ is the set $\Gamma(n_0)=\{ n \mid n_0\Rightarrow_\R n\}$. The \textbf{stoichiometric subspace} $H_\R$ is the real span of the vectors $\{y'-y \mid y\to y'\in \R\}$. The \textbf{conservation class} containing $x_0\in\mathbb{R}^\S_{\geq 0}$ is the set $C(x_0)=(x_0 + H_\R)\cap\mathbb{R}^S_{\geq 0}$.

Fix a weakly reversible reaction network $(\S,\R)$. Let $x=(x_i)_{i\in \S}$. The \textbf{associated ideal} $I_{(\S,\R)}\subseteq \mathbb{C}[x]$  is the ideal generated by the binomials $\{ x^y - x^{y'}\mid y\to y'\in\R\}$. A reaction network is \textbf{prime} iff its associated ideal is a prime ideal, i.e., for all $f, g\in \mathbb{C}[x]$, if $fg\in I$ then either $f\in I$ or $g\in I$. 

A \textbf{reaction system} is a triple $(\S,\R,k)$ where $(\S,\R)$ is a reaction network and $k:\R\to\mathbb{R}_{> 0}$ is called the \textbf{rate function}. It is \textbf{detailed balanced} iff it is reversible and there exists a point $q\in\mathbb{R}^\S_{>0}$ such that for every reaction $y\to y'\in\R$:
\[
k_{y\to y'}\,q^y \,(y' - y)   = k_{y'\to y}\, q^{y'}\,(y - y')
\]
A point $q\in\mathbb{R}^\S_{>0}$ that satisfies the above condition is called a \textbf{point of detailed balance}.
 
Fix a reaction system  $(S,\R,k)$. Then \textbf{stochastic mass action} describes a continuous-time Markov chain 
on the state space $\mathbb{Z}^S_{\geq 0}$. A state $n=(n_i)_{i\in S}\in\mathbb{Z}^S_{\ge 0}$ of this Markov chain 
represents a vector of molecular counts, i.e., each $n_i$ is the number of molecules of species $i$ in the population. 
Transitions go from $n \to n+y'-y$ for each $n\in\mathbb{Z}^S_{\geq 0}$ and each $y\to y' \in\R$, with transition rates 
\[
\lambda(n \to n + y'-y) =k_{y\rightarrow y'} \frac{n!}{(n-y)!}
\]

The following theorem states that the stationary distributions of detailed-balanced reaction networks are obtained from products of Poisson distributions. It is well-known, see for example~\cite{whittle1986systems} for a proof.
\begin{theorem}\label{thm:stationary}
If $(\S, \R,k)$ is detailed balanced with $q$ a point of detailed balance then the corresponding stochastic mass action Markov chain admits on each reachability class $\Gamma \subset \mathbb{Z}^{S}_{\ge 0}$ a unique stationary distribution
\[
\pi_\Gamma(n) \propto
\begin{cases}
  \e^{-q}\frac{q^{n}}{n!} & \qquad \text{for } n \in \Gamma \\
  0 & \qquad \text{otherwise}
\end{cases}
\]
\end{theorem}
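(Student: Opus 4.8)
The plan is to verify that the claimed product-Poisson measure satisfies detailed balance for the continuous-time Markov chain, and then to invoke standard Markov-chain theory for existence and uniqueness of the stationary distribution on each reachability class. The starting point is the elementary fact that if a measure $\pi$ on the state space of a continuous-time Markov chain satisfies the pairwise flux-balance equations $\pi(n)\,\lambda(n\to m)=\pi(m)\,\lambda(m\to n)$ for every ordered pair of states, then $\pi$ is stationary: summing that identity over $m$ turns it into the global balance equation $\sum_m\pi(m)\lambda(m\to n)=\pi(n)\sum_m\lambda(n\to m)$ at every $n$. Hence it suffices to check the pairwise condition for the candidate $\pi(n)\propto\e^{-q}\,q^n/n!$.

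Second, I would fix a reaction $y\to y'\in\R$ and a state $n$ with $n\ge y$ coordinatewise, and set $m=n+y'-y$. By reversibility the transitions between $n$ and $m$ come in matched pairs: the reaction $y\to y'$ carries $n$ to $m$ at rate $k_{y\to y'}\,n!/(n-y)!$, and its reverse $y'\to y$ carries $m$ back to $n$ at rate $k_{y'\to y}\,m!/(m-y')!$, where one checks that $m\ge y'$ holds automatically and that $m-y'=n-y$. Substituting these rates and the candidate $\pi$ into the flux-balance identity, the factorials cancel and the relation collapses to $k_{y\to y'}\,q^y=k_{y'\to y}\,q^{y'}$, which is exactly the defining property of a point of detailed balance; the degenerate case $y=y'$ contributes no movement and is vacuous. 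Since an unordered pair $\{n,m\}$ may be joined by several distinct reactions, one then sums the verified per-reaction identities to obtain full pairwise flux balance, and stationarity follows from the first step.

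Third, I would restrict attention to a reachability class $\Gamma$. Because the network is reversible, $n\mapsto_\R n'$ implies $n'\mapsto_\R n$, so $\Gamma$ is a communicating (hence irreducible) class from which no transition escapes; therefore the restriction of $\pi$ to $\Gamma$ still satisfies flux balance and is stationary for the chain restricted to $\Gamma$. Normalizability is immediate from $\sum_{n\in\mathbb{Z}^S_{\ge0}}q^n/n!=\e^{\sum_i q_i}<\infty$, so the restricted measure normalizes to a genuine probability distribution on $\Gamma$; the existence of a normalizable stationary distribution on an irreducible continuous-time chain forces positive recurrence, which in turn yields uniqueness.

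The main obstacle is bookkeeping rather than conceptual: keeping the convention $n!/(n-y)!=0$ whenever $n\not\ge y$ consistent, so that the chain provably never leaves $\mathbb{Z}^S_{\ge0}$, and correctly accounting for pairs of states joined by more than one reaction. The only genuinely structural input beyond the existence of the detailed-balance point $q$ is reversibility, which is what makes each reachability class a communicating class and thereby delivers the uniqueness half of the statement.
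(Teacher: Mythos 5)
The paper does not actually prove this theorem: it states it as well known and defers entirely to the cited reference (Whittle, \emph{Systems in Stochastic Equilibrium}), so there is no in-paper argument to compare against. Your proposal reconstructs the standard proof that the citation points to, and the core computation is correct: for $n\ge y$ and $m=n+y'-y$ one has $m-y'=n-y$, the factorials in $\pi(n)\lambda(n\to m)=k_{y\to y'}q^n/(n-y)!$ and $\pi(m)\lambda(m\to n)=k_{y'\to y}q^m/(m-y')!$ cancel against a common denominator, and the identity reduces to $k_{y\to y'}q^y=k_{y'\to y}q^{y'}$, which is the detailed-balance hypothesis (the paper's displayed version carries a stray factor of $y'-y$, but this scalar relation is clearly what is meant). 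You are also right to sum over the possibly several reactions joining a given unordered pair of states, and to observe that reversibility makes each reachability class a communicating class, so the restricted and renormalized Poisson measure is stationary there.

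The one step I would not let pass without comment is the final inference ``normalizable stationary measure on an irreducible continuous-time chain $\Rightarrow$ positive recurrence $\Rightarrow$ uniqueness.'' For continuous-time chains this implication requires non-explosiveness: there exist irreducible, reversible chains with a summable reversible measure that nevertheless explode, and for such chains a probability solution of $\pi Q=0$ need not be the unique stationary law of the process. So either you should verify non-explosiveness for these polynomially-bounded mass-action rates on the given reachability class, or you should phrase uniqueness at the level of solutions of the master equation restricted to $\Gamma$ (which is how the reaction-network literature, e.g.\ the product-form results of Anderson, Craciun and Kurtz, states it), or simply cite the relevant regularity criterion. This is a genuine, if standard, gap in the last sentence of your argument; everything before it is sound and is essentially the proof the paper outsources to its reference.
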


\textbf{Deterministic mass action} describes a system of ordinary differential equations in {\em concentration} variables $\{x_i(t) \mid i\in \S\}$: 
\begin{equation}\label{eqn:ma}
\dot{x}(t) = \sum_{y\to y' \in \R} k_{y\to y'}\,  x(t)^y \,(y' - y)
\end{equation}

Note that every detailed balance point is a fixed point to Equation~\ref{eqn:ma}. For detailed balanced reaction systems, every fixed point is also detailed balanced. Moreoever, every conservation class $C(x_0)$ has a unique detailed balance point $x^*$ in the positive orthant. Further if the reaction network is prime then $x^*$ is a ``global attractor,'' i.e., all trajectories starting in $C(x_0)\cap\mathbb{R}^S_{>0}$ asymptotically reach $x^*$. (Recently Craciun~\cite{craciun2015toric}  has proved the global attractor theorem for all detailed-balanced reaction systems with a much more involved proof. We do not need Craciun's theorem, the special case which holds for prime detailed-balanced reaction systems and is much easier to prove, suffices for our purposes.) The following Global Attractor Theorem for Prime Detailed Balanced Reaction Systems follows from \cite[Corollary~4.3, Theorem~5.2]{Manoj_2011Catalysis}. See \cite[Theorem~3]{gopalkrishnan2016scheme} for another restatement of this theorem.

\begin{theorem}\label{thm:gac}
Let $(\S,\R,k)$ be a prime, detailed balanced reaction system with point of detailed balance $q$. Fix a point $x_0\in\mathbb{R}^S_{>0}$. Then there exists a point of detailed balance $x^*$ in $C(x_0)\cap\mathbb{R}^S_{>0}$ such that for every trajectory $x(t)$ to Equation~\ref{eqn:ma} with initial conditions $x(0)\in C(x_0)\cap\mathbb{R}^S_{\geq 0}$, the limit $\lim_{t\to\infty} x(t)$ exists and equals $x^*$. Further $D(x(t)\|q)$ is strictly decreasing along non-stationary trajectories and attains its unique minimum value in $C(x_0)\cap\mathbb{R}^S_{\geq 0}$ at $x^*$.
\end{theorem}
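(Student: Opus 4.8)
\section*{Proof proposal}

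The plan is to use the KL divergence $V(x)\coloneqq D(x\|q)$ as a strict Lyapunov function, in the classical Horn--Jackson style, and to invoke primality only at the end to control the boundary. First I would identify $x^*$. The function $V$ is nonnegative, strictly convex, and coercive on $\mathbb{R}^S_{\geq 0}$ (each summand $x_i\log(x_i/q_i)-x_i+q_i$ is $\geq 0$, strictly convex, and grows like $x_i\log x_i$), so it attains a unique minimum $x^*$ on the closed convex set $C(x_0)$. This $x^*$ is in the positive orthant: if $x^*_j=0$ for some $j$, the directional derivative of $V$ at $x^*$ towards the interior point $x_0\in C(x_0)$ would be $-\infty$ (its $j$-th term is $x_{0,j}\log(x^*_j/q_j)=-\infty$), contradicting optimality. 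Since $\nabla V(x)=(\log(x_i/q_i))_i$, the first-order condition for $x^*$ is $\log(x^*/q)\perp H_\R$, which is exactly the condition $(x^*/q)^y=(x^*/q)^{y'}$ for every reaction, i.e.\ that $x^*$ is a point of detailed balance; conversely any positive detailed balance point in $C(x_0)$ is a critical point of $V$ on the slice $x_0+H_\R$ and hence, by strict convexity, equals $x^*$. So $x^*$ is the unique positive detailed balance point of $C(x_0)$, and $V$ attains its unique minimum over $C(x_0)$ there --- this is the last sentence of the theorem apart from strict monotonicity.

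Next I would run the Lyapunov computation along an interior trajectory of Equation~\ref{eqn:ma}:
\[
\dot V=\nabla V(x)\cdot\dot x=\sum_{y\to y'\in\R}k_{y\to y'}\,x^y\,\log\frac{(x/q)^{y'}}{(x/q)^{y}}.
\]
Group each reversible pair $y\to y'$ and $y'\to y$, and use detailed balance $k_{y\to y'}q^y=k_{y'\to y}q^{y'}=:\kappa_{yy'}>0$, so that $k_{y\to y'}x^y=\kappa_{yy'}a$ and $k_{y'\to y}x^{y'}=\kappa_{yy'}b$ with $a=(x/q)^y$, $b=(x/q)^{y'}$; the pair's contribution is $\kappa_{yy'}(a-b)\log(b/a)\le 0$, with equality iff $a=b$. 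Summing, $\dot V\le 0$ with equality iff $(x/q)^y=(x/q)^{y'}$ for all reactions, i.e.\ iff $x$ is an equilibrium. Thus $V$ is strictly decreasing along any non-stationary trajectory (which passes through no equilibrium, by uniqueness of solutions) --- the remaining clause of the theorem --- and, since $x^*$ is the unique positive equilibrium in $C(x_0)$ and $V$ has a strict minimum there along the invariant slice $x_0+H_\R$, the standard Lyapunov stability argument makes $x^*$ locally asymptotically stable within $C(x_0)$.

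Then I would assemble global convergence. A trajectory with $x(0)\in C(x_0)\cap\mathbb{R}^S_{\geq 0}$ remains in $C(x_0)$ (the right-hand side of Equation~\ref{eqn:ma} lies in $H_\R$, and the nonnegative orthant is forward invariant for mass-action), and it is bounded since the sublevel sets of the coercive $V$ within $C(x_0)$ are compact and $V$ is nonincreasing along it. Hence the $\omega$-limit set $\omega$ is nonempty, compact, connected and invariant, $V$ is constant on $\omega$, and by LaSalle $\omega$ consists of equilibria. If $\omega$ meets the open orthant it contains a positive equilibrium of $C(x_0)$, which must be $x^*$ by uniqueness, and local asymptotic stability then forces $\omega=\{x^*\}$; so $\lim_{t\to\infty}x(t)=x^*$.

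The hard part is exactly ruling out $\omega\subseteq\partial\mathbb{R}^S_{\geq 0}$ (and, for a start on $\partial C(x_0)$, ruling out that the trajectory is pinned to a proper face forever): this is the content of the global attractor conjecture in the general detailed-balanced case, and it is the only place where the hypothesis that $(\S,\R)$ is prime enters. Here I would invoke the structural results on prime reaction networks in \cite[Corollary~4.3, Theorem~5.2]{Manoj_2011Catalysis}: a prime detailed-balanced network is persistent --- interior trajectories stay bounded away from the boundary, and boundary starts flow immediately into the open orthant --- so no boundary $\omega$-limit can occur and every case reduces to the one just handled. This is the special case of Craciun's theorem that suffices here, and it is comparatively easy precisely because primality excludes the ``critical siphons'' responsible for boundary behaviour.
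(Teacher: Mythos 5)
Your argument is correct, but it is worth noting how it relates to what the paper actually does: the paper offers no proof of Theorem~\ref{thm:gac} at all, simply asserting that it ``follows from [Corollary~4.3, Theorem~5.2]'' of the cited work on catalysis, whereas you reconstruct the standard Horn--Jackson/Lyapunov skeleton from first principles and quarantine the citation to the one place it is genuinely needed. Concretely, your identification of $x^*$ as the unique minimizer of the strictly convex, coercive $D(\cdot\|q)$ on $C(x_0)$, the boundary-exclusion via the $-\infty$ one-sided directional derivative, the equivalence of the first-order condition $\log(x^*/q)\perp H_\R$ with detailed balance, and the pairing trick giving $\dot V=\kappa_{yy'}(a-b)\log(b/a)\le 0$ are all sound (and the last clause of the theorem could be closed even more cheaply: once $x^*\in\omega$, constancy of $V$ on $\omega$ plus strict minimality at $x^*$ gives $\omega=\{x^*\}$ without invoking local asymptotic stability). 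You are also right that the only nontrivial content is persistence --- ruling out $\omega$-limit points on $\partial\mathbb{R}^S_{\geq 0}$ and handling boundary initial conditions --- and that this is exactly where primality (absence of catalysis, hence of critical siphons) enters; deferring that step to the same reference the paper leans on for the entire theorem is legitimate. The net effect is that your write-up is strictly more informative than the paper's: it makes visible which parts of the global attractor statement are elementary convexity and which part is the structural input about prime networks.
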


\section{Problem Statement}\label{sec:probstmt}
We argue in the next lemma that a product of Poisson distributions is not an unreasonable form to use as a prior on ligand populations. The ideas are familiar from statistical mechanics as well as stochastic processes. We recall them in a chemical context.

\begin{lemma}\label{lem:productpoisson}
Consider a well-mixed vessel of infinite volume with $n$ species $X_{1}, X_{2},\ldots, X_{n}$ at concentrations $x_{1}, x_{2},\dots, x_{n}$ respectively. Assume that the solution is sufficiently dilute, and that molecule volumes are vanishingly small. A unit volume aliquot is taken. Then the probability of finding the population in the aliquot in state $(m_{1}, m_{2},\ldots, m_{n})\in\mathbb{Z}_{\geq 0}$ is given by the product-Poisson distribution $\prod _{ i=1 }^{n}{ \frac{e^{-x_{i}}x_{i}^{m_{i}}}{m_{i}!} } $
\end{lemma}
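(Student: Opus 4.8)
The plan is to realize the ``infinite-volume'' vessel as the $V\to\infty$ limit of finite vessels of volume $V$, and then invoke the classical Poisson limit theorem (the ``law of rare events''). Fix $V>0$ and place $N_i\coloneqq\lfloor x_iV\rfloor$ molecules of species $X_i$ in a vessel of volume $V$. The ``well-mixed'' hypothesis is interpreted to mean that, conditioned on the molecular counts, the positions of all molecules are independent and uniformly distributed over the vessel; the ``sufficiently dilute / vanishingly small molecule volume'' hypothesis is interpreted to mean there is no excluded-volume interaction, so that the positions of molecules of different species are mutually independent as well. Under these assumptions, the event that a given molecule of species $X_i$ lies in a fixed unit-volume aliquot is a Bernoulli trial with success probability $1/V$, independent across molecules and across species.

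Consequently the count $m_i$ of molecules of species $X_i$ found in the aliquot is $\mathrm{Binomial}(N_i,1/V)$, and the vector $(m_1,\dots,m_n)$ has the product-binomial law $\prod_{i=1}^n\binom{N_i}{m_i}(1/V)^{m_i}(1-1/V)^{N_i-m_i}$. Now let $V\to\infty$; since $N_i/V=\lfloor x_iV\rfloor/V\to x_i$, each factor converges by the routine computation
\[
\binom{N_i}{m_i}\Big(\tfrac1V\Big)^{m_i}\Big(1-\tfrac1V\Big)^{N_i-m_i}=\frac{\fallfac{N_i}{m_i}}{V^{m_i}}\cdot\frac{1}{m_i!}\cdot\Big(1-\tfrac1V\Big)^{N_i}\Big(1-\tfrac1V\Big)^{-m_i}\ \longrightarrow\ \e^{-x_i}\frac{x_i^{m_i}}{m_i!},
\]
using $\fallfac{N_i}{m_i}/V^{m_i}\to x_i^{m_i}$, $(1-1/V)^{N_i}\to\e^{-x_i}$, and $(1-1/V)^{-m_i}\to 1$ for each fixed $m_i$. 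Multiplying the $n$ independent factors yields the claimed product-Poisson distribution $\prod_{i=1}^n\e^{-x_i}x_i^{m_i}/m_i!$.

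The only genuine subtlety is conceptual rather than computational: one must commit to a precise meaning for an ``infinite-volume well-mixed vessel at concentration $x$.'' Reading it as the $V\to\infty$ limit of the finite-volume i.i.d.-uniform placement model above makes every step rigorous, and the factorization across species is exactly what the diluteness / vanishing-volume hypothesis purchases. (Equivalently, one may model the infinite-volume vessel directly as independent Poisson point processes of intensity $x_i$ per unit volume; restricting each to a unit-volume region then gives $\po(x)$ immediately from the defining properties of Poisson processes, and the finite-$V$ computation above is precisely the justification that this is the correct model.)
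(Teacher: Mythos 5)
Your proof is correct and follows essentially the same route as the paper's: model the finite-volume vessel with independent uniform placements, obtain a product of binomials for the aliquot counts, and pass to the Poisson limit as $V\to\infty$. Your version is slightly more careful than the paper's (using $\lfloor x_iV\rfloor$ for integrality and spelling out each factor in the limit), but the argument is the same.
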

\begin{proof}
We will first do the analysis for a finite volume $V$ and then let $V\to\infty$. 

Consider a container of finite volume V, which contains species $ X_1, X_2, \ldots,X_n $ at concentrations $x_{1}, x_{2}\ldots, x_{n}$. Consider a unit volume aliquot within this particular container. The probability of finding a particular molecule from the vessel within the unit volume aliquot is $\frac{1}{V}$. The number of molecules of species $X_i$ in the vessel is $Vx_i$ for $i = 1\dots n$. Hence the probability of finding $m_i$ molecules of species $X_i$  in the aliquot is given by the binomial coefficient 
\[
{{V x_{i}}\choose{m_i}} \left({\frac{1}{V}}\right)^{m_i}\left(1-{\frac{1}{V}}\right)^{Vx_i - m_i}.
\]
We assume that the solution is sufficiently dilute, and that molecular sizes are vanishingly small, so that the probability of finding one molecule in the aliquot is independent of the probability of finding a different molecule in the aliquot. This assumption leads to: 
\begin{align*}
\Pr(m_1, m_2, \ldots, m_n\mid x_1, x_2,\dots, x_n) &= \prod_{i=1}^{n}{ {{V x_{i}}\choose{m_i}} \left(\frac{1}{V}\right) ^{m_i} \left( 1-\frac{1}{V}\right) ^{Vx_i - m_i} } 
\end{align*}
The RHS follows because for all $i\in\{1,2,\dots,n\}$:
\begin{align*}
\lim_{V\to\infty} {{V x_{i}}\choose{m_i}} \left(\frac{1}{V}\right) ^{m_i} \left( 1-\frac{1}{V}\right) ^{Vx_i - m_i}
&= \lim_{V\to\infty}\frac{Vx_i(Vx_i-1)\dots(Vx_i - m_i + 1)}{V^{m_i} m_i!}\left[\left(1 - 1/V)\right)^V \right]^{x_i-m_i/V}\
\end{align*}
which equals $\e^{-x_i} x_i^{m_i}/m_i!$
\end{proof}

Fix positive integers $n_R,n_L\in\mathbb{Z}_{\geq 0}$ with $n_R \leq n_L$ denoting the number of receptor species and the number of ligand species respectively. Fix $q = (q_1,q_2,\dots, q_{n_L})\in \mathbb{R}_{> 0}^{n_L}$ denoting Poisson rate parameters for the product-Poisson distribution
$\po(q)$ which we consider as a prior over ligand numbers. Fix an $n_R\times n_L$ \textbf{observation matrix} $O$ with entries $o_{ij}$ in the nonnegative rational numbers $\mathbb{Q}_{\geq 0}$. The entry $o_{ij}$ denotes the affinity of the $i$'th receptor $R_i$ for the $j$'th ligand $L_j$. The intuition is that when ligand $j$ encounters receptor $i$, the propensity that a binding occurs is proportional to $o_{ij}$. So a high-affinity ligand will trigger a receptor more often than a low-affinity ligand with the same concentration, with the number of times they trigger the receptor in proportion to their corresponding entries in the observation matrix. 

Our results in this paper will hold for a subclass of observation matrices which we term tidy. An observation matrix $O=(o_{ij})_{n_R\times n_L}$ is \textbf{tidy} iff for each receptor $R_i$ there exists a \textbf{message vector} $m_i\in\mathbb{Z}^{n_L}_{\geq 0}$ such that $O m_i = e_i$ where $e_i\in\mathbb{R}^{n_R}$ is the unit vector with a $1$ in the row corresponding to the $i$'th receptor. The intuition is that for $j=1$ to $n_L$, species $X_j$ will be the cell's internal representation of the ligand $L_j$. Every time receptor $R_i$ is bound, it will trigger a cascade leading to the synthesis inside the cell of $m_{ij}$ molecules of species $X_j$ for $j=1$ to $n_L$.

Note that there could be multiple message vector sets $\{m_i\}_{i=1\text{ to }n_R}$, so the cell need not choose the ``correct'' one. The task of figuring out the true state of the environment will be left to the reaction network operating inside the cell between the molecules $X_1,X_2,\dots,X_{n_L}$. The messages only perform the task of initializing the reaction network in the right reachability class. The following questions concern us.
\begin{enumerate}
\item\label{Q1} Given information on the exact numbers $r=(r_1,r_2,\dots, r_{n_R})\in\mathbb{Z}_{\geq 0}^{n_R}$ of receptor binding events, obtain samples over populations $l=(l_1,l_2,\dots,l_{n_L})\in\mathbb{Z}_{\geq 0}^{n_L}$ of the ligand species according to the Bayesian posterior distribution 
${\Pr[l\mid (r,\po(q_1,q_2,\dots,q_{n_L}))]}$

\item\label{Q2} Given information on the average numbers $\langle r\rangle =(\langle r_1\rangle,\langle r_2\rangle,\dots, \langle r_{n_R}\rangle)\in\mathbb{R}_{>0}^{n_R}$ of receptor binding events (averaged over the surface of the cell, or time, or both), obtain samples  over populations $l=(l_1,l_2,\dots,l_{n_L})\in\mathbb{Z}_{\geq 0}^{n_L}$ of the ligand species according to the Bayesian posterior distribution ${\Pr[l\mid (\langle r\rangle, \po(q_1,q_2,\dots,q_{n_L}))]}$
\end{enumerate}

\section{An Example}\label{sec:ex}
Before moving to the general solution, we illustrate our main ideas with an example.
%\addtocounter{example}{-1}
\begin{example}[continues=ex:run]%[continues from Page-1]
Consider the observation matrix
\[
O=\begin{blockarray}{cccc}
&L_1 &L_2 &L_3\\
\begin{block}{c(ccc)}
R_1 &1 &0 &1\\
R_2 &0 &1 &1\\
\end{block}
\end{blockarray}
\] and the point $q=(q_1,q_2,q_3)\in\mathbb{R}_{>0}^3$ from Example~\ref{ex:run}. We describe a chemical reaction system $(\eproj(O,B),k_q)$ as follows. There is one chemical species $X_i$ corresponding to each ligand $L_i$, so that the species are $X_1,X_2$, and $X_3$. To describe the reactions, we compute a basis for the right kernel of $O$. In this case, the vector $(1,1,-1)^T$ is a basis for the right kernel. (To be precise, we will view the right kernel as a free group in the integer lattice, and take a basis for this free group. This ensures not only that each basis vector has integer coordinates, but also that the corresponding reaction network is prime, which we use crucially in our proofs.) Each basis vector is written as a reversible reaction, with negative numbers representing stoichiometric coefficients on one side of the chemical equation, and positive numbers representing stoichiometric coefficients on the other side. So the vector $(1,1,-1)^T$ describes the reversible pair of reactions $X_1 + X_2 \rightleftharpoons X_3$. 

The rates of the reactions need to be set so that $q$ is a point of detailed balance. For this example, calling the forward rate $k_1\in\mathbb{R}_{>0}$ and the backward rate $k_2\in\mathbb{R}_{>0}$, the balance condition is $k_1 q_1q_2 = k_2 q_3$ so that $k_1/k_2 = \frac{q_3}{q_1q_2}$. One choice satisfying this condition is $k_1 = q_3$ and $k_2=q_1q_2$. Note that our scheme requires only the ratio of the rates to be specified (Remark~\ref{rmk:rates}).

\paragraph{Solution to Question~\ref{Q1}:} Given $r=(r_1,r_2)\in\mathbb{Z}_{\geq 0}^2$ interpreted as $(r_1,r_2)^T=O (l_1,l_2,l_3)^T$, we want to draw samples from the conditional distribution $\Pr[(l_1,l_2,l_3)\mid (r_1,r_2,\text{Poisson}(q_1,q_2,q_3))]$. The statistical solution is to multiply the Bayesian prior $\text{Poisson}(q_1,q_2,q_3)$ by the likelihood $\Pr[(r_1,r_2)\mid (l_1,l_2,l_3,\text{Poisson}(q_1,q_2,q_3))]$, and normalize so probabilities add up to $1$. The likelihood is the characteristic function of the set 
\[L=\{l=(l_1,l_2,l_3)\in\mathbb{Z}^3_{\geq 0} \mid O l^T = r^T\}.
\] 

Note that $O$ is tidy with message vectors $m_1 = (1,0,0)^T$ and $m_2=(0,1,0)^T$. The reaction system $(\eproj(O,B),k_q)$ which is $X_1 + X_2 \xrightleftharpoons[q_1q_2]{q_3} X_3$ here, is initialized at $n(0)=(r_1,r_2,0) = \sum_i r_i m_i$, and allowed to evolve according to stochastic mass-action kinetics with master equation:
\begin{align*}
\dot{p}(n,t) = &p(n_1-1,n_2-1,n_3+1,t)\left(\frac{q_1q_2}{q_3}(n_3+1) - n_1n_2\right)\
\\ +\,\, &p(n_1+1,n_2+1,n_3-1,t)\left((n_1+1)(n_2+1) - \frac{q_1q_2}{q_3}n_3\right)
\end{align*}
where $p(n,t)$ is the probability that the system is in state $n$ at time $t$.
We claim that the steady-state distribution is the required Bayesian posterior. First note that this reaction system has a detailed balanced point $q$, so it admits $\text{Poisson}(q)$ as a steady-state distribution. Since $n(0)\in L$, it is enough to show that $L$ forms an irreducible component of the Markov chain. Together we conclude that the steady-state distribution will be a restriction of $\text{Poisson}(q_1,q_2,q_3)$ to the set $L$. 

To obtain that $L$ forms an irreducible component of the Markov chain, we will crucially use the fact that we chose a basis of the free group to generate our reactions, and not just a basis of the real vector space. This will allow us to prove that the corresponding reaction network is prime, and hence that $L$ forms an irreducible component. Note, for example, that if we had chosen the vector $(2,2,-2)^T$ in the kernel instead of $(1,1,-1)^T$, that would have given us the reaction $2X_1 + 2X_2 \rightleftharpoons 2X_3$ in which case $L$ does not form an irreducible component of the Markov chain since each reaction conserves parity of molecular counts.

\paragraph{Solution to Question~\ref{Q2}:} Given $\langle r\rangle =(\langle r_1\rangle, \langle r_2\rangle)\in\mathbb{R}_{> 0}^2$ of binding events of receptors $R_1$ and $R_2$, with $\langle r\rangle $ interpreted as empirical average of $O (l_1,l_2,l_3)^T$ over a large number of samples of $(l_1,l_2,l_3)$, we want to draw samples from the conditional distribution ${\Pr[(l_1,l_2,l_3)\mid (\langle r_1\rangle, \langle r_2\rangle, \po(q_1,q_2,q_3))]}$. Note that we are conditioning over an event whose probability tends to $0$ unless $Oq^T=\langle r\rangle^T$, so the conditional distribution needs to be defined using the notion of regular conditional distribution~\cite{dembo2010large}. As the number of samples goes to infinity, by the conditional limit theorem~\cite[Theorem~7.3.8, Corollary~7.3.5]{dembo2010large}, this conditional distribution converges to $\text{Poisson}(x^*)$ where $x^*=(x_1^*,x_2^*,x_3^*)\in\mathbb{R}^3_{\geq 0}$ minimizes $D(x\|q)$ among all $x$ satisfying $Ox^* = \langle r\rangle$. Because these results are stated in the reference in much greater generality, to show that these results actually apply to our case will need some technical work which is the content of Section~\ref{sec:q2}.

To compute $x^*$, we allow $(\eproj(O,B),k_q)=X_1 + X_2 \xrightleftharpoons[q_1q_2]{q_3} X_3$ to evolve according to deterministic mass-action kinetics starting from $x(0)=(\langle r_1\rangle,\langle r_2\rangle,0)=\sum_i \langle r_i\rangle m_i$.
\begin{align*}
\left(\begin{array}{c}
\dot{x_1}(t)\
\\\dot{x_2}(t)\
\\\dot{x_3}(t)
\end{array}\right)= \left(x_1(t)x_2(t)-\frac{q_1q_2}{q_3}x_3(t)\right)
\left(\begin{array}{r} 
-1\
\\-1\
\\1\
\end{array}\right)
\end{align*}

Then the equilibrium concentration is the desired $x^*$ by Theorem~\ref{thm:gac}. The required sample can be drawn by sampling a unit aliquot, as in Lemma~\ref{lem:productpoisson}.

Our scheme suggests that the reactions are carried out in infinite volume, which seems impractical. In practise, infinite volume need not be necessary because the chemical dynamics of even molecular numbers as small as $50$ molecules are often described fairly accurately by the infinite-volume limit. Further, our scheme suggests an infinite number of samples for this to work correctly, which also looks impractical. However, the rate of convergence is exponentially fast, so the scheme can be expected to work quite accurately even with a moderate number of samples. Analysis beyond the scope of the current paper is needed to explore the tradeoffs in volume and number of samples (also see Section~\ref{sec:cfw}).
\end{example}

\section{Main}
\subsection{A Reaction Scheme}\label{sec:scheme}
In this subsection, we present a reaction scheme $\eproj$ (short for projection) that takes as input a matrix $O$ with rational entries, and a basis $B$ for the free group $\ker O\cap\mathbb{Z}^{n_L}_{\geq 0}$ and outputs a reversible reaction network $\eproj(O,B)$ that is prime. The same scheme, appropriately initialized, serves to perform M-projection (as we showed in \cite{gopalkrishnan2016scheme}) and E-projection, as we show here.

\begin{definition} Fix a matrix $O=(o_{ij})_{m\times nL}$ with rational entries $o_{ij}\in\mathbb{Q}$, and a basis $B$ for the free group $\mathbb{Z}^{n}\cap\ker O$. The reaction network $\eproj(O, B)$ is described by species $X_1,X_2,\ldots,X_{n}$ and for each $b\in B$ the reversible reaction:
$
\sum_{j:b_j>0}b_jX_j \rightleftharpoons\sum_{j:b_j<0}-b_jX_j
$
\end{definition}

\begin{remark}\label{rmk:rates}
Exquisitely setting the specific rates of individual reactions to desired values requires a detailed understanding of molecular dynamics, and is forbiddingly difficult with current molecular technology. When we set rates, we will only require that a given point remains a point of detailed balance. This is equivalent to specifying the equilibrium constants of all the reactions. This is an equilibrium thermodynamics condition, hence much less forbidding.
\end{remark}

\begin{lemma} \label{thm:prime1}
Fix a matrix $O=(o_{ij})_{m\times n}$ with rational entries $o_{ij}\in\mathbb{Q}$, and a basis $B$ for the free group $\mathbb{Z}^n\cap\ker O$. Then the reaction network $\eproj(O,B)$ is prime.
\end{lemma}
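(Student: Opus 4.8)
The plan is to show that the ideal $I_{\eproj(O,B)}$ is prime by identifying it with the kernel of an explicit ring homomorphism into an integral domain. This is the standard route for lattice ideals: a binomial ideal $I_L$ associated to a sublattice $L \subseteq \mathbb{Z}^n$ is prime precisely when $L$ is saturated, i.e., $L = (L \otimes \mathbb{R}) \cap \mathbb{Z}^n$. Here $L = \ker O \cap \mathbb{Z}^n$ is saturated by construction, since it is defined as the set of all integer points in a rational subspace, and $B$ is chosen as a basis for this (free, finitely generated) group.

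Concretely, first I would introduce new variables $t_1,\dots,t_m$ (and their inverses) and consider the Laurent monomial map $\phi: \mathbb{C}[x_1,\dots,x_n] \to \mathbb{C}[t_1^{\pm 1},\dots,t_m^{\pm 1}]$ sending $x_j \mapsto \prod_{i=1}^m t_i^{o_{ij}}$ — clearing denominators in the rationals $o_{ij}$ first by rescaling, or equivalently choosing an integer matrix with the same row space so that the exponents are integral. The key claim is that $\ker \phi = I_{\eproj(O,B)}$. The inclusion $\supseteq$ is immediate: for $b \in B$, $\phi(x^{b^+} - x^{b^-}) = \prod_i t_i^{(Ob)_i^+} - \prod_i t_i^{(Ob)_i^-}$, and since $Ob = 0$ both monomials equal $1$, so the generators map to zero. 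Since the target ring $\mathbb{C}[t^{\pm 1}]$ is an integral domain, $\ker\phi$ is prime, and once the reverse inclusion is established we are done.

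The reverse inclusion $\ker\phi \subseteq I_{\eproj(O,B)}$ is the heart of the argument and where saturation is used. The standard technique: take $f \in \ker\phi$; using the binomial relations $x^{b^+} \equiv x^{b^-} \pmod{I}$ for $b$ ranging over $B$ (and hence over all of $L$, since $B$ generates $L$ as a group and the relations generate a subgroup-like equivalence on monomials), reduce $f$ modulo $I$ to a canonical form in which distinct monomials lie in distinct cosets of $L$ in $\mathbb{Z}^n$. Then apply $\phi$: two monomials $x^u, x^v$ satisfy $\phi(x^u) = \phi(x^v)$ iff $O(u-v) = 0$ iff $u - v \in \ker O \cap \mathbb{Z}^n = L$ — and this last equality is exactly saturation, i.e., it is here that we need $L$ to be the \emph{full} integer kernel and not merely a finite-index subgroup (the failure mode illustrated in Section~\ref{sec:ex} with the vector $(2,2,-2)^T$). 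So monomials in distinct $L$-cosets map to $\mathbb{C}$-linearly independent Laurent monomials; since the reduced form of $f$ is in $\ker\phi$, all its coefficients vanish, so $f \in I$.

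**The main obstacle** is making the reduction-to-canonical-form step rigorous: one must argue that every polynomial can be brought, modulo $I$, to a $\mathbb{C}$-linear combination of monomials representing distinct $L$-cosets. A clean way is to pick a monomial order refining the grading by $\mathbb{Z}^n/L$ and show the $S$-polynomials of the binomial generators reduce to zero (a Gröbner-basis / Buchberger argument), or alternatively to cite the structure theory of lattice ideals directly. I would likely do the Gröbner argument explicitly since the generators are so simple, keeping careful track that it is the group-basis hypothesis on $B$ (equivalently, saturation of $L$) that makes the coset-separation work; without it the conclusion is false, so this hypothesis must be visibly the load-bearing one in the writeup.
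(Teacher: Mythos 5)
Your overall strategy---realize the ideal as the kernel of a Laurent monomial map $\phi$ and invoke saturation of $L=\ker O\cap\mathbb{Z}^n$---is exactly the standard proof that the \emph{lattice ideal} $I_L=\langle x^{u^+}-x^{u^-} \mid u\in L\rangle$ is prime, and it is essentially the content of the result the paper merely cites: the paper's entire proof is the one-line appeal to \cite[Corollary~1.15]{miller2011theory} plus the denominator-clearing observation you also make. So you are reconstructing the cited argument rather than taking a different route. The problem is that your reconstruction contains a genuine gap at its central claim, $\ker\phi=I_{\eproj(O,B)}$, specifically in the parenthetical where you pass from the congruences $x^{b^+}\equiv x^{b^-}\pmod I$ for $b\in B$ to the same congruences for all $u\in L$. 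The ideal $I_{\eproj(O,B)}$ is the \emph{lattice basis ideal} $I_B$, generated only by the binomials of the chosen basis vectors; modulo $I_B$ one can conclude only $x^{u^++w}\equiv x^{u^-+w}$ for a suitable shift $w\in\mathbb{Z}^n_{\ge 0}$, because one needs a path from $u^+$ to $u^-$ through the nonnegative orthant using moves from $B$. Consequently your ``reduction to a canonical form with one monomial per $L$-coset'' is not available, and in general $I_B\subsetneq I_L=\ker\phi$ even when $L$ is saturated.

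A concrete witness is $O=\left(\begin{smallmatrix}1&1&1&1\\ 0&1&2&3\end{smallmatrix}\right)$ with the $\mathbb{Z}$-basis $(1,-2,1,0),(0,1,-2,1)$ of $\ker O\cap\mathbb{Z}^4$, giving the reactions $X_1+X_3\rightleftharpoons 2X_2$ and $X_2+X_4\rightleftharpoons 2X_3$ and hence $I_B=\langle x_1x_3-x_2^2,\ x_2x_4-x_3^2\rangle$. Here $x_4(x_1x_3-x_2^2)+x_2(x_2x_4-x_3^2)=x_3(x_1x_4-x_2x_3)$ lies in $I_B$ while $x_1x_4-x_2x_3$ does not (no degree-$2$ linear combination of the two generators produces the monomials $x_1x_4$ or $x_2x_3$), so $I_B$ is not prime; its zero set contains the extra component $\{x_2=x_3=0\}$. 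The same example defeats your fallback plan: the Buchberger computation does not close up, and no two-element generating set can present this codimension-two, non-complete-intersection toric ideal. What is true in general is only $(I_B:(x_1\cdots x_n)^\infty)=I_L$. So your proof needs an additional ingredient identifying $I_B$ with $I_L$ for the particular $O$ and $B$ at hand---saturation of $L$ alone does not supply it---and the same distinction between lattice ideals and lattice basis ideals is precisely what must be checked when matching the network's associated ideal against the statement of the cited corollary.
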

\begin{proof}
\cite[Corollary~1.15]{miller2011theory} establishes this when $O$ is a matrix of integers. Scaling the rational entries to make them all integers makes no difference to the kernel.
\end{proof}

\begin{remark}\label{rmk:prime}
From \cite[Theorem 5.2]{Manoj_2011Catalysis}, prime reaction networks are free of catalysis. Catalysts require care to implement. Ideally a catalyst should act as a switch, so that its absence completely shuts off the catalyzed reaction. In practice, there is always a ``leak reaction''~\cite{seesawgates} even in the absence of the catalyst species. Care needs to be taken that the timescales of the leak are much slower than the timescales of the catalyzed reaction to get an acceptable approximation to the final answer. It is therefore notable that our scheme is able to perform a nontrivial computation even though it admits an implementation wholly free of catalysis.
\end{remark}

\begin{example}
Consider the reaction $2X\rightleftharpoons 0$. On the state space $\mathbb{Z}_{\geq 0}$, this reaction will preserve the parity of the initial number $n_0$ of $X$. This is a case where the intersection of a conservation class $C(n_0)$ with the state space does not equal the reachability class $\Gamma(n_0)$. It turns out that these ``non-benign'' situations only happen when the reaction network is not prime. We will use this property when answering Questions 1 and 2, so we establish it now.
\end{example}

\begin{definition}
A weakly-reversible reaction network $(\S,\R)$ is \textbf{benign} iff for all $n_0\in\mathbb{Z}^\S_{\ge 0}$, the conservation class $C(n_0)\cap\mathbb{Z}^\S_\geq 0 = \Gamma(n_0)$, the reachability class of $n_0$.
\end{definition}

\begin{lemma}\label{lem:benign}
Every prime reaction network is benign.
\end{lemma}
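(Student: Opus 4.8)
The plan is to show that for a prime reaction network $(\S,\R)$, the reachability class $\Gamma(n_0)$ coincides with the full lattice of the conservation class, $C(n_0)\cap\mathbb{Z}^\S_{\geq 0}$. One inclusion, $\Gamma(n_0)\subseteq C(n_0)\cap\mathbb{Z}^\S_{\geq 0}$, is immediate: every reaction step $n\mapsto_\R n'$ changes the state by a vector $y'-y\in H_\R$ and preserves nonnegativity, so any $\R$-reachable state lies in $(n_0+H_\R)\cap\mathbb{Z}^\S_{\geq 0}$. The substance is the reverse inclusion, and this is where primality must be used.

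The natural bridge is the binomial ideal $I_{(\S,\R)}=\langle x^y-x^{y'}\mid y\to y'\in\R\rangle$. There is a standard dictionary (see Miller--Sturmfels, \emph{Combinatorial Commutative Algebra}, and the lattice-ideal literature) between reachability in the monoid $\mathbb{Z}^\S_{\geq 0}$ under the moves $\{y'-y\}$ and membership/primality questions for this ideal. Concretely: two monomials $x^n$ and $x^{n'}$ are congruent modulo $I_{(\S,\R)}$ — meaning $x^n-x^{n'}\in I_{(\S,\R)}$ — if and only if $n$ and $n'$ are connected by a sequence of \emph{signed} applications of the reaction vectors staying within $\mathbb{Z}^\S_{\geq 0}$, i.e. $n\Leftrightarrow_\R n'$ using both directions of each reaction (which, for a reversible or weakly reversible network, is exactly $\R$-reachability: $n\Rightarrow_\R n'$ and $n'\Rightarrow_\R n$). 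So the reachability classes are precisely the equivalence classes of monomials modulo $I_{(\S,\R)}$. The key algebraic fact I would invoke is that $I_{(\S,\R)}$ is a lattice ideal for the lattice $L=H_\R\cap\mathbb{Z}^\S$ — equivalently, that it equals the \emph{saturation} $(I_{(\S,\R)}:(x_1\cdots x_{n})^\infty)$ — precisely when the network is prime (or more generally when the ideal is prime, a lattice ideal is prime iff $L$ is saturated in $\mathbb{Z}^\S$, i.e. $\mathbb{Z}^\S/L$ is torsion-free). Here primality is exactly the hypothesis; and since $B$ is taken to be a basis of the \emph{free group} $\mathbb{Z}^\S\cap\ker O$, the lattice $L$ is saturated, which is the combinatorial shadow of Lemma~\ref{thm:prime1}.

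Granting that $I_{(\S,\R)}$ is the saturated lattice ideal of $L=H_\R\cap\mathbb{Z}^\S$, the argument finishes as follows. Suppose $n,n'\in C(n_0)\cap\mathbb{Z}^\S_{\geq 0}$, so $n-n'\in H_\R\cap\mathbb{Z}^\S=L$. For a lattice ideal one has $x^a-x^b\in I_L$ whenever $a-b\in L$ and $a,b\geq 0$; therefore $x^n-x^{n'}\in I_{(\S,\R)}$. By the reachability dictionary this means $n\Leftrightarrow_\R n'$, hence $n\Rightarrow_\R n'$; in particular $n'\in\Gamma(n)$, and since $n\in\Gamma(n_0)$ is arbitrary (apply this with $n=n_0$), we get $C(n_0)\cap\mathbb{Z}^\S_{\geq 0}\subseteq\Gamma(n_0)$. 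Combined with the trivial inclusion this gives equality, which is the definition of benign.

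The main obstacle — and the step I would spell out most carefully — is the identification of $I_{(\S,\R)}$ with the lattice ideal $I_L$ and the attendant monomial-congruence-equals-reachability correspondence. The delicate point is that $x^n-x^{n'}\in I_{(\S,\R)}$ a priori only gives a chain of monomial equalities using the generators \emph{as polynomial relations} (allowing arbitrary monomial multipliers), which corresponds to signed reaction moves that stay in $\mathbb{Z}^\S_{\geq 0}$ at every intermediate step; one must check this coincides with $\Rightarrow_\R$ in both directions, using weak reversibility (each forward move $y\to y'$ can be undone by a path $y'\Rightarrow y$) to convert "signed moves" into genuine $\R$-reachability. The remaining ingredient, that primality of the ideal forces the defining lattice to be saturated so that $I_{(\S,\R)}=I_L$ rather than a strictly smaller ideal, is where Lemma~\ref{thm:prime1}'s hypothesis that $B$ generates the \emph{free group} (not merely a full-rank sublattice) is essential, exactly as the $2X\rightleftharpoons 0$ example warns.
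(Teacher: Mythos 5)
Your proof is correct, and while it rests on the same algebraic pivot as the paper's --- primality forces the relevant lattice to be saturated --- it routes the combinatorial half of the argument differently. The paper works entirely at the lattice level: it defines $\mathcal{L}=\mathbb{Z}\text{-span}\{y'-y\}$, cites saturation to upgrade the rational combination $n_0'-n_0\in H_\R$ to an integer combination of reaction vectors, and then appeals to weak reversibility to claim a nonnegative combination and hence reachability. You instead pass through the binomial ideal: primality gives $I_{(\S,\R)}=I_L$ for the saturated lattice $L=H_\R\cap\mathbb{Z}^\S$, the identity $x^n-x^{n'}=x^{\min(n,n')}\bigl(x^{(n-n')_+}-x^{(n-n')_-}\bigr)$ puts $x^n-x^{n'}$ in the ideal, and the standard membership-equals-fiber-connectivity dictionary converts this into a chain of signed reaction moves whose intermediate states all lie in $\mathbb{Z}^\S_{\geq 0}$, with weak reversibility turning signed moves into genuine $\Rightarrow_\R$ steps. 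What your detour buys is precisely the point the paper's last sentence elides: having $n_0'-n_0$ as a nonnegative integer combination of reaction vectors does not by itself order the reactions so that every intermediate state is admissible, whereas the fiber-connectivity theorem delivers such an ordering for free. The cost is that you must justify two imported facts (that a prime binomial ideal containing no monomial equals the lattice ideal of the saturation, and the connectivity dictionary itself), which you correctly flag as the steps needing the most care; both are standard in the lattice-ideal literature you cite. One cosmetic slip: the lemma concerns an arbitrary prime network, so the remarks about the basis $B$ of $\ker O\cap\mathbb{Z}^{n_L}$ belong to the application of the lemma (via Lemma~\ref{thm:prime1}), not to its proof.
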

\begin{proof}
Let $(\S,\R)$ be a prime reaction network. This means that the associated ideal $(x^y - x^{y'})_{y\to y'\in\R}$ is prime. We define the \textbf{associated lattice} as 
\[
\mathcal{L} = \left\{ \sum_{y\to y'\in\R}  a_{y\to y'} (y'-y) \mid a_{y\to y'}\in\mathbb{Z}\text{ for all }y\to y'\in\R\right\}.
\]
Note from \cite{miller2011theory} that $\mathcal{L}$ is \textbf{saturated}, i.e., if $k\in\mathbb{Z}$ and $v\in \mathbb{Z}^\S$ are such that $k v\in \mathcal{L}$ then $v\in \mathcal{L}$

Suppose $n_0,n_0' \in\mathbb{Z}^\S_{\geq 0}$ such that $n_0'\in C(n_0)$ but $n_0'$ is not reachable from $n_0$. 
The condition $n_0'\in C(n_0)$ means that there is a rational combination 
\[
n_0' - n_0 = \sum_{y\to y'\in\R} b_{y\to y'} (y' -y)
\]
This shows that for some sufficiently large integer $M$, the quantity $M(n_0'-n_0)\in \mathcal{L}$. Since $\mathcal{L}$ is saturated, $n_0'-n_0\in \mathcal{L}$. Hence there is an integer combination 
\[
n_0' - n_0 = \sum_{y\to y'\in\R} c_{y\to y'} (y' -y).
\]
Since $(\S,\R)$ is weakly-reversible, there is a path $y'\Rightarrow_\R y$ for every $y\to y'\in\R$, and therefore there is a combination over nonnegative integers. This implies that $n_0\Rightarrow_\R n_0'$. Hence the network is benign.
\end{proof}

\subsection{Solution to Question 1}\label{sec:q1}
In this section we solve Question~\ref{Q1} using the reaction network $\eproj(O,B)$.

Fix an $n_L\times n_R$ tidy observation matrix $O=(o_{ij})_{n_R\times n_L}$ with non-negative rational entries $o_{ij}\in\mathbb{Q}_{\ge 0}$, and message vectors $\{m_i\in\mathbb{Z}_{\ge 0}^{n_L}\}_{i=1,2,\dots,n_R}$, Poisson rate parameter vector $q\in\mathbb{R}_{\ge 0}^{n_L}$, and number $r\in \mathbb{Z}_{\ge 0}^{n_R}$ of receptor binding events observed. Fix a basis $B$ for the free group $\ker O\cap\mathbb{Z}^{n_L}_{\geq 0}$. Let $k_q$ be a function of rate constants for the reaction network $\eproj(O,B)$ such that $q$ is a point of detailed balance of the reaction system $(\eproj(O,B),k_q)$. For example, the choice $k_q(y\to y') = q^{y'}$ satisfies this requirement.
\begin{theorem}\label{thm:canonical}
Consider Stochastic Mass Action for the reaction system $(\eproj(O,B),k_q)$ from the initial state $n(0) =\sum_{i=1}^{n_R}r_im_i$. Then the Bayesian Posterior ${\Pr[l\mid (r,\po(q))]}$ is the stationary distribution of this Markov chain.
\end{theorem}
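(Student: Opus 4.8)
The plan is to identify the stationary distribution by combining the two ingredients that the example already isolated: first, that the reaction system $(\eproj(O,B),k_q)$ is detailed balanced with $q$ a point of detailed balance, so by Theorem~\ref{thm:stationary} it admits on each reachability class $\Gamma$ a unique stationary distribution proportional to the restriction of $\po(q)$ to $\Gamma$; and second, that the relevant reachability class is exactly the likelihood support set $L = \{l\in\mathbb{Z}^{n_L}_{\ge 0}\mid Ol = r\}$. Granting these, the stationary distribution is $\po(q)$ conditioned on $L$, and I would then check that this conditional distribution is precisely the Bayesian posterior $\Pr[l\mid (r,\po(q))]$: the prior is $\po(q)$, the likelihood $\Pr[r\mid l]$ is the indicator of $L$ (since $r$ is a deterministic function $r = Ol$ of the internal ligand representation), so Bayes' rule gives the normalized restriction of $\po(q)$ to $L$. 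This last step is essentially a definition-chase and should be routine.

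The substantive work is the second ingredient, i.e. showing $\Gamma(n(0)) = L$ where $n(0) = \sum_i r_i m_i$. First I would verify $n(0)\in L$: since $Om_i = e_i$ (tidiness), $O\,n(0) = \sum_i r_i e_i = r$, so $n(0)\in L$. Next, the stoichiometric subspace $H_{\eproj(O,B)}$ is the real span of $B$, which spans $\ker O$ over $\mathbb{R}$; hence the conservation class $C(n(0)) = (n(0) + \ker O)\cap\mathbb{R}^{n_L}_{\ge 0}$, and $C(n(0))\cap\mathbb{Z}^{n_L}_{\ge 0}$ consists exactly of the nonnegative integer points $l$ with $l - n(0)\in\ker O$, i.e. with $Ol = On(0) = r$ — that is, $C(n(0))\cap\mathbb{Z}^{n_L}_{\ge 0} = L$. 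So it remains to show $\Gamma(n(0)) = C(n(0))\cap\mathbb{Z}^{n_L}_{\ge 0}$. This is precisely benignness of the network, and $\eproj(O,B)$ is prime by Lemma~\ref{thm:prime1}, hence benign by Lemma~\ref{lem:benign}. This chain — tidy $\Rightarrow$ $n(0)\in L$; $B$ a lattice basis of $\ker O\cap\mathbb{Z}^{n_L}$ $\Rightarrow$ conservation class $\cap\,\mathbb{Z}^{n_L}_{\ge 0}$ equals $L$; prime $\Rightarrow$ benign $\Rightarrow$ reachability class equals conservation class $\cap\,\mathbb{Z}^{n_L}_{\ge 0}$ — is the heart of the argument, and the reason the scheme insisted on a basis of the free group rather than merely of the real vector space (cf. the $2X_1+2X_2\rightleftharpoons 2X_3$ cautionary remark).

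The main obstacle is the benignness step, but that has already been quarantined into Lemma~\ref{lem:benign}, so within this proof the only thing to be careful about is bookkeeping: confirming that $\ker O\cap\mathbb{Z}^{n_L}$ is indeed a free group whose real span is all of $\ker O$ (so that the stoichiometric subspace of $\eproj(O,B)$ really is $\ker O$), and that the convention $k_q(y\to y') = q^{y'}$ does make $q$ a point of detailed balance — the detailed-balance equation $k_{y\to y'}q^y = k_{y'\to y}q^{y'}$ becomes $q^{y'}q^y = q^y q^{y'}$, which holds. A minor subtlety worth a sentence: Theorem~\ref{thm:stationary} gives a stationary distribution supported on the reachability class of the initial state, and since the Markov chain restricted to $\Gamma(n(0)) = L$ is irreducible, this stationary distribution is unique and is the limiting distribution, so "stationary distribution of this Markov chain" is unambiguous once we note the chain is effectively the irreducible chain on $L$.
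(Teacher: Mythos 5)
Your proposal is correct and follows essentially the same route as the paper's proof: Bayes' rule identifies the posterior as the normalized restriction of $\po(q)$ to $L$, the chain prime $\Rightarrow$ benign (Lemmas~\ref{thm:prime1} and~\ref{lem:benign}) together with $n(0)\in L$ identifies $L$ as the reachability class, and Theorem~\ref{thm:stationary} finishes. You are in fact slightly more explicit than the paper in justifying why $C(n(0))\cap\mathbb{Z}^{n_L}_{\ge 0}=L$ (the stoichiometric subspace being all of $\ker O$), a step the paper's proof leaves implicit.
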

\begin{proof}
  Let $L = \left\{l \in \mathbb{Z}_{\geq 0}^{n_{L}} \mid Ol = r\right\}$. From Bayes Theorem $\Pr[l\mid (r,\po(q))]\propto \text{Prior}\times\text{Likelihood}$. The prior is $\po(q)$ and the likelihood is $\Pr[r\mid l]=\Pr[Ol=r]$ which is the characteristic function on $L$. Therefore
$
\Pr[l\mid (r,\po(q))]\propto
\begin{cases}
\e^{-q}\frac{q^l}{l!} &\qquad \text{for } l \in L\\
0 & \qquad \text{otherwise}
\end{cases}
$
\\Since the reaction network $\eproj(O,B)$ is prime, by Lemma~\ref{thm:prime1} and Lemma~\ref{lem:benign}, $\eproj(O,B)$ is benign. By construction $n(0)\in L$, and so $L$ is the reachability class $\Gamma(n(0))$. Applying Theorem~\ref{thm:stationary} to $L=\Gamma(n_0)$
\[
\pi_L(l) \propto
\begin{cases}
\e^{-q}\frac{q^{l}}{l!} & \qquad \text{for } l \in L \\
0 & \qquad \text{otherwise}
\end{cases}
\]
which is exactly the Bayesian Posterior  $\Pr[l\mid (r,\po(q))]$.
\end{proof}

In the following theorem, we show that our reaction scheme has computed an E-Projection.

\begin{theorem}\label{thm:eproj1}
Let $\mathcal{P}\coloneqq\{\text{Probability measure }P \text{ on } \mathbb{Z}_{\ge 0}^{n_L} \mid P(l) = 0 \text{ for all } l\notin L\}$. Then  $\Pr[l\mid (r,\po(q))]$ is the E-Projection of $\po(q)$ on $\mathcal{P}$.
\end{theorem}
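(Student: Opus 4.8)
The plan is to reduce the claim to the standard fact that information projection onto the set of measures supported on a fixed set is realized by conditioning, via a Pythagorean-type identity for KL divergence. Write $\nu := \po(q)$ and $p^\star(l) := \Pr[l \mid (r,\po(q))]$. By Theorem~\ref{thm:canonical} (see its proof), $p^\star$ is the normalized restriction of $\nu$ to $L$: with $Z := \sum_{l\in L}\nu(l)$ we have $p^\star(l) = \nu(l)/Z$ for $l\in L$ and $p^\star(l) = 0$ otherwise. First I would record the easy facts that $Z\in(0,1]$ and $p^\star\in\mathcal{P}$. Indeed $Z\le 1$ as a partial sum of the probabilities $\nu(l)$; and $Z > 0$ because $L \ne \emptyset$ --- tidiness gives $O\,n(0) = \sum_i r_i(O m_i) = \sum_i r_i e_i = r$, so $n(0) = \sum_i r_i m_i \in L$ --- together with $\nu(l) > 0$ for every $l$ (each $q_j > 0$). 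Note also $p^\star(l)>0$ on $L\supseteq\mathrm{supp}(P)$ for any $P\in\mathcal P$, so $D(P\|p^\star)$ is a well-defined element of $[0,\infty]$.

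The key step is the identity
\[
D(P\|\nu) \;=\; D(P\|p^\star) + \log\tfrac{1}{Z}
\]
valid for every $P\in\mathcal{P}$ as an equation in $[0,\infty]$. Since $P$ vanishes off $L$ and $P,\nu$ are probability measures, the ``$-x_i+y_i$'' terms of $D$ contribute nothing and $D(P\|\nu) = \sum_{l\in L}P(l)\log\frac{P(l)}{\nu(l)}$; on $L$ we have $\nu(l) = Z\,p^\star(l)$ with $p^\star(l) > 0$, so $\log\frac{P(l)}{\nu(l)} = \log\frac{P(l)}{p^\star(l)} - \log Z$, and summing against $P$ (using $\sum_l P(l) = 1$) yields the identity whenever the sums converge absolutely, which I would justify separately using the elementary bound $P(l)\log\frac{P(l)}{p^\star(l)} \ge P(l) - p^\star(l)$ to control the negative part; the remaining case $D(P\|\nu) = \infty$ is handled directly by the same bound and forces the right-hand side to be $\infty$ too.

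Finally I would invoke nonnegativity of KL divergence for probability measures (Gibbs' inequality), namely $D(P\|p^\star)\ge 0$ with equality if and only if $P = p^\star$. Combined with the identity, $D(P\|\nu) \ge \log\tfrac{1}{Z}$ for all $P\in\mathcal{P}$ with equality exactly at $P = p^\star$, so $p^\star = \arg\min_{P\in\mathcal{P}}D(P\|\nu)$, which is by definition the E-Projection of $\po(q)$ onto $\mathcal{P}$.

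I expect the only real care needed is the bookkeeping around possibly-infinite sums: splitting $\sum_l\big[P(l)\log\frac{P(l)}{\nu(l)} - P(l) + \nu(l)\big]$ into separately convergent pieces and handling the $[0,\infty]$-valued cases. A clean way to organize this is to first observe that any $P$ with $D(P\|\nu) = \infty$ cannot be a minimizer (since $D(p^\star\|\nu) = \log\tfrac{1}{Z} < \infty$), and then carry out the substitution only on $P$ with $D(P\|\nu) < \infty$, where the bound above makes absolute convergence automatic. Everything else in the argument is a one-line substitution.
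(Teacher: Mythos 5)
Your proof is correct, but it takes a genuinely different route from the paper's. The paper argues via a formal Lagrange-multiplier computation: it introduces multipliers for the normalization constraint and for the support constraints $P(l)=0$, $l\notin L$, sets $\partial F/\partial P(l)=0$, and reads off $P^*(l)\propto\po(q)(l)$ on $L$. You instead prove the exact compensation identity $D(P\|\po(q)) = D(P\|p^\star)+\log\tfrac{1}{Z}$ for every $P\in\mathcal{P}$ and finish with Gibbs' inequality. Your route buys more: it is fully rigorous over the countably infinite state space (the paper's first-order stationarity condition is only necessary, is not accompanied by a check that the critical point is a minimum rather than some other stationary point, and does not address the $D=\infty$ case or the convergence of the sums being differentiated), it identifies the optimal value $\log\tfrac{1}{Z}$ explicitly, and it yields uniqueness of the minimizer, which the $\arg\min$ in the definition of E-projection implicitly requires. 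What the Lagrange computation buys in exchange is brevity and the appearance of deriving the answer rather than verifying a candidate --- though here the candidate is already forced by the proof of Theorem~\ref{thm:canonical}. Your attention to $Z>0$ (via tidiness and $n(0)=\sum_i r_i m_i\in L$) and to the rearrangement of conditionally convergent series is warranted and has no counterpart in the paper's argument.
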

\begin{proof}
The E-projection of $\po(q)$ onto $\mathcal{P}$ is given by $P^* = \operatorname* {arg\,min}_{\substack{P\in\mathcal{P}}}{D(P\|\po(q))}$. We use Lagrange multiplier to minimize $D(P\|\po(q))$ with constraints $\sum_{l\in L} {P(l)} = 1$ and $P(l) = 0$ for $l \notin L$.
\[
F(P,\lambda,\mu) = D(P\|\po(q)(l)) + \lambda\left(\sum_{l\in L} {P(l)} - 1 \right)+ \sum_{l\notin L}{\mu_{l}P(l)}
\]
At $P^*$, $\frac{\partial F}{\partial P(l)}=0$ for all $l\in\mathbb{Z}_{\ge 0}^{n_L}$. That is, $\log\left(\frac{P^*(l)}{\po(q)}\right)+1+\lambda=0$ if $l\in L$ and $P^*(l)=0$ if $l\notin L$. That is,
\[
P^*(l) \propto
\begin{cases}
\po(q)(l) & \qquad \text{for } x \in L \\
0 & \qquad \text{otherwise}
\end{cases}
\]
which is the Bayesian Posterior $\Pr[l\mid (r,\po(q))]$
\end{proof}

%%%%%%%%%%%%%%%%%%%%%%%%%%%%%%%%%%%%%%%%%%%
%%%%%%%%%%%%%%%%%%%%%%%%%%%%%%%%%%%%%%%%%%%%
%%%%%%%%%%%%%%%%%%%%%%%%%%%%%%%%%%%%%%%%%%%%%%%%

\subsection{Solution to Question 2}\label{sec:q2}
In this subsection we solve Question~\ref{Q2} using the reaction network $\eproj(O,B)$. We first characterize the Bayesian Posterior $\Pr[l\mid (\langle r\rangle ,\po(q))]$ as an E-projection using a conditional limit theorem. 

\begin{definition}
Fix $\langle r\rangle\in\mathbb{R}^{n_R}_{>0}$. Then $\mathcal{P}_{\langle r\rangle}$ is the set of those probability measures on $\mathbb{Z}_{\ge 0}^{n_L}$ such that if $Y$ is a random variable distributed according to $P\in\mathcal{P}_{\langle r\rangle}$ then the expected value $\langle OY \rangle_P=\langle r \rangle$.
\end{definition}

\begin{theorem}\label{thm:corgcp}
Fix $\langle r\rangle\in\mathbb{R}^{n_R}_{>0}$. Then $\Pr[l\mid (\langle r\rangle,\po(q))]$ is a Poisson distribution, as well as the E-Projection $\arg\min_{P\in\mathcal{P}_{\langle r\rangle}}D(P\|\po(q))$ of $\po(q)$ on $\mathcal{P}_{\langle r\rangle}$.
\end{theorem}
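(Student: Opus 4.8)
The plan is to split the statement into two claims: (i) that $\Pr[l\mid(\langle r\rangle,\po(q))]$ is an E-projection of $\po(q)$ onto $\mathcal{P}_{\langle r\rangle}$, and (ii) that this E-projection is itself a product-Poisson distribution. For (i), I would invoke a conditional limit theorem in the spirit of \cite[Theorem~7.3.8, Corollary~7.3.5]{dembo2010large}. The regular conditional distribution $\Pr[l\mid(\langle r\rangle,\po(q))]$ is defined as the limit, as $N\to\infty$, of $\po(q)$ conditioned on the empirical average $\frac1N\sum_{k=1}^N O Y^{(k)}$ lying near $\langle r\rangle$, where $Y^{(1)},\dots,Y^{(N)}$ are i.i.d.\ $\po(q)$. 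Sanov's theorem says the empirical measure of these samples concentrates, under the conditioning, on the distribution in $\mathcal{P}_{\langle r\rangle}$ closest to $\po(q)$ in relative entropy; that distribution is exactly $\arg\min_{P\in\mathcal{P}_{\langle r\rangle}} D(P\|\po(q))$, and the conditional limit theorem then identifies the one-dimensional marginal of the conditioned sample with this minimizer. This gives (i) essentially by citation, modulo checking hypotheses.

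For (ii), I would solve the constrained minimization $\min_{P} D(P\|\po(q))$ subject to $\langle OY\rangle_P=\langle r\rangle$ and $\sum_l P(l)=1$ directly, mirroring the Lagrange-multiplier computation in the proof of Theorem~\ref{thm:eproj1}. Writing $F(P,\nu,\lambda)=D(P\|\po(q)) + \sum_{i=1}^{n_R}\nu_i\bigl(\langle (OY)_i\rangle_P - \langle r\rangle_i\bigr) + \lambda\bigl(\sum_l P(l)-1\bigr)$ and setting $\partial F/\partial P(l)=0$ yields $\log(P^*(l)/\po(q)(l)) + 1 + \lambda + \sum_i \nu_i (Ol)_i = 0$, so $P^*(l)\propto \po(q)(l)\,\prod_i e^{-\nu_i (Ol)_i} = \po(q)(l)\, e^{-(O^T\nu)\cdot l}$. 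Since $\po(q)(l) = e^{-q}\prod_j q_j^{l_j}/l_j!$, multiplying by $e^{-(O^T\nu)\cdot l}=\prod_j e^{-(O^T\nu)_j l_j}$ just rescales each Poisson parameter: $P^*$ is the product-Poisson distribution with parameter $x^*_j = q_j\, e^{-(O^T\nu)_j}$, i.e.\ $P^*=\po(x^*)$. The multiplier $\nu$ is then pinned down by the constraint $O x^* = \langle r\rangle$ (using that the mean of $\po(x^*)$ is $x^*$), and by Lemma~\ref{thm:poisdiv} one has $D(\po(x^*)\|\po(q))=D(x^*\|q)$, so $x^*$ is precisely the minimizer of $D(\cdot\|q)$ over $\{x\in\mathbb{R}^{n_L}_{>0}: Ox=\langle r\rangle\}$, linking cleanly to the $x^*$ used with Theorem~\ref{thm:gac} in Section~\ref{sec:ex}.

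The main obstacle is the rigor of step (i): the cited large-deviations results are stated for general Polish-space-valued empirical measures with closed/open constraint sets, and one must verify that the affine constraint set $\mathcal{P}_{\langle r\rangle}$ (an expectation constraint, not a closed set in the weak topology because of possible escape of mass) and the unbounded integer lattice $\mathbb{Z}_{\ge0}^{n_L}$ satisfy the hypotheses — in particular that the rate function's infimum over the constraint set is attained, that there is no duality gap, and that the conditional limit theorem's regularity conditions (e.g.\ the minimizer lying in the "interior" in the appropriate sense, exponential integrability of $\po(q)$ which holds since Poisson has all exponential moments) are met. I expect this to require identifying the minimizer explicitly first (which step (ii) does) and then checking it is the unique one and is an exposed/interior point, so that the general machinery applies; the existence and uniqueness of $x^*$ can be borrowed from Theorem~\ref{thm:gac} applied to $\eproj(O,B)$ initialized at $\sum_i\langle r_i\rangle m_i$, whose conservation class is exactly $\{x>0: Ox=\langle r\rangle\}$. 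The paragraph "to show that these results actually apply to our case will need some technical work" flags exactly this, so I would devote the bulk of the write-up to discharging those hypotheses, with the Lagrangian computation of (ii) being the short, routine part.
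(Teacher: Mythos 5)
Your proposal is correct in outline and reaches the same two conclusions, but it executes them differently from the paper. The paper does not separate the argument into your steps (i) and (ii); instead it applies the Gibbs Conditioning Principle of Dupuis--Ellis (\cite[Theorem~7.3.8]{dupuis2011weak}, via \cite[Lemma~7.3.6]{dupuis2011weak}) \emph{iteratively}, $n_R$ times, once per row of $O$: at each stage it tilts the current measure by $\e^{-\beta_i U_i}$ with $U_i(n)=(On)_i/\langle r_i\rangle$, checks the two positivity conditions $\mu_{i-1}(\{U_i<1\})>0$ and $\mu_{i-1}(\{U_i>1\})>0$ and that $\beta_\infty=-\infty$, and observes that since each $U_i$ is linear the tilted measure remains product-Poisson. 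The Poisson form thus falls out of the construction rather than from an explicit optimization, and the paper describes the iteration as ``a formal way of doing Lagrange optimization'' without ever writing the Lagrangian. Your route --- a single application of the Sanov-based conditional limit theorem of \cite{dembo2010large} for the full vector constraint, followed by an explicit Lagrange computation showing $P^*(l)\propto\po(q)(l)\,\e^{-(O^T\nu)\cdot l}=\po(x^*)(l)$ with $Ox^*=\langle r\rangle$ --- makes the identification of the minimizer more transparent and ties it cleanly to the detailed-balance point of Theorem~\ref{thm:gac} via Lemma~\ref{thm:poisdiv}, at the cost of having to verify the conditional-limit hypotheses for a vector-valued expectation constraint in one shot (non-closedness of $\mathcal{P}_{\langle r\rangle}$ in the weak topology, attainment of the infimum, no duality gap), which you correctly flag as the bulk of the work. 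The paper's iterative scheme trades that for the (also nontrivial, and not discussed in the paper) issue that sequential one-dimensional conditionings must be shown to agree with joint conditioning on all $n_R$ constraints; neither write-up fully discharges its respective technical burden, so your proposal is on equal footing with the paper here.
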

\begin{proof}
We apply the Gibbs Conditioning Principle (\cite[Theorem~7.3.8]{dupuis2011weak}) $n_R$ times with a sequence of energy functions $U_1,\dots,U_{n_R}$ which iteratively set the expected values of the $n_R$ rows of $O$ to the corresponding values from $\langle r \rangle$. The intuition is that this is a formal way of doing Lagrange optimization.

To show that this result can be applied, we choose the space $\Sigma$ as $\mathbb{R}^{n_L}$, the initial distribution $\mu=\mu_0$ as $\po(q)$ on $\mathbb{Z}^{n_L}_{\geq 0}$ and $0$ everywhere else, and for $i=1$ to $n_R$, we define the function $U_i:\Sigma\to [0,\infty)$ by $
U_i(n)=\frac{(On)_i}{\langle r_i\rangle}$. The sequence of Gibbs distributions are then defined by 
$
\frac{d\mu_{i+1}}{d\mu_i} = \frac{\e^{-\beta_i U_i(n)}}{Z_{\beta_i}}
$
where $Z_{\beta_i}$ is the normalizing constant. It is easily checked that each of these is a Poisson distribution since the $U_i$'s are linear functions. Since $\langle r\rangle\in\mathbb{R}^{n_R}_{>0}$, there is nonzero probability under $\mu_{i-1}$ that $(Ox)_i<\langle r_i\rangle$ for all $i$. Hence for $i=1$ to $n_R$ it follows that $\mu_{i-1}(\{x \mid U_i(x)<1\})>0$. The other condition $\mu_{i-1}(\{x \mid U(x)>1\})>0$ is true since under a Poisson distribution, $(Ox)_i$ can take arbitrarily large integer values with nonzero probability. Since the $\mu_i$ are all Poisson, $\beta_\infty = -\infty$ since Poisson distributions converge for arbitrarily small nonegative values of rate parameters. Hence the assumptions of \cite[Lemma~7.3.6]{dupuis2011weak} are satisfied and we get to apply \cite[Theorem~7.3.8]{dupuis2011weak} sequentially $n_R$ times and conclude that the empirical distribution on the space $\mathbb{Z}^{n_L}_{\geq 0}$ converges weakly to a Poisson distribution $\mu_{n_R} = \po(p^*)\in\mathcal{P}_{\langle r\rangle}$, which is also the E-projection $\arg\min_{\substack{P\in\mathcal{P}_{\langle r\rangle}}}D(P\|\po(q))$. 
\end{proof}

Now fix an $n_L\times n_R$ tidy observation matrix $O=(o_{ij})_{n_R\times n_L}$ with non-negative rational entries $o_{ij}\in\mathbb{Q}_{\ge 0}$, and message vectors $\{m_i\in\mathbb{Z}_{\ge 0}^{n_L}\}_{i=1,2,\dots,n_R}$, Poisson rate parameter vector $q\in\mathbb{R}_{\ge 0}^{n_L}$, and average number  $\langle r\rangle\in \mathbb{R}_{> 0}^{n_R}$ of receptor binding events observed. Fix a basis $B$ for the free group $\ker O\cap\mathbb{Z}^{n_L}_{\geq 0}$. Let $k_q$ be a function of rate constants for the reaction network $\eproj(O,B)$ such that $q$ is a point of detailed balance of the reaction system $(\eproj(O,B),k_q)$. For example, the choice $k_q(y\to y') = q^{y'}$ satisfies this requirement.

\begin{theorem}\label{thm:grandcanonical}
Consider the solution $x(t)$ to the Deterministic Mass Action ODEs for the reaction system $(\eproj(O,B),k_q)$ from the initial concentration $x(0) =\sum_{i=1}^{n_R} \langle r_i\rangle m_i$. Let $x^* = \lim_{t\to\infty} x(t)$. Then $x^*$ is well-defined, and the Bayesian Posterior ${\Pr[l\mid (r,\po(q))]}$ equals $\po(x^*)$. That is, one obtains samples from the Bayesian Posterior by measuring the state of a unit volume aliquot of the system at equilibrium.
\end{theorem}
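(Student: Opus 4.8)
The plan is to combine Theorem~\ref{thm:corgcp} (which identifies the Bayesian posterior $\Pr[l\mid(\langle r\rangle,\po(q))]$ with the Poisson distribution $\po(p^*)$, where $p^*$ minimizes $D(\,\cdot\,\|q)$ over concentration vectors with $Ox=\langle r\rangle$) with Theorem~\ref{thm:gac} (the Global Attractor Theorem for prime detailed-balanced systems, which says the deterministic trajectory converges to the unique detailed-balance point in its conservation class, and that this point minimizes $D(\,\cdot\,\|q)$ over that class). So the crux is to show that the conservation class $C(x(0))$ of the reaction network $\eproj(O,B)$ started at $x(0)=\sum_i\langle r_i\rangle m_i$ is exactly $\{x\in\mathbb{R}^{n_L}_{\geq 0}\mid Ox=\langle r\rangle\}$, i.e.\ that the stoichiometric subspace $H_{\eproj(O,B)}$ equals $\ker O$. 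Once that is established, $x^*:=\lim_{t\to\infty}x(t)$ exists by Theorem~\ref{thm:gac}, lies in $C(x(0))\cap\mathbb{R}^{n_L}_{>0}$, and is the unique minimizer of $D(x\|q)$ on that affine slice; by Theorem~\ref{thm:corgcp} this same minimizer is $p^*$, whence $\po(x^*)=\po(p^*)=\Pr[l\mid(\langle r\rangle,\po(q))]$. The last sentence of the statement — that sampling a unit-volume aliquot at equilibrium yields a draw from the posterior — is then immediate from Lemma~\ref{lem:productpoisson}.

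\textbf{Key steps, in order.} First, I would record that $\eproj(O,B)$ is reversible by construction and prime by Lemma~\ref{thm:prime1}, and that $k_q$ makes $q$ a point of detailed balance, so Theorem~\ref{thm:gac} applies. Second, I would identify the stoichiometric subspace: each basis vector $b\in B$ of the free group $\ker O\cap\mathbb{Z}^{n_L}$ contributes the reaction vector $\pm b$ to $H_{\eproj(O,B)}$, and since $B$ spans $\ker O$ over $\mathbb{Q}$ (a $\mathbb{Z}$-basis of a full-rank sublattice is a $\mathbb{Q}$-basis of the ambient kernel), we get $H_{\eproj(O,B)}=\ker O$. Hence $C(x(0))=(x(0)+\ker O)\cap\mathbb{R}^{n_L}_{\geq 0}=\{x\in\mathbb{R}^{n_L}_{\geq 0}\mid Ox=Ox(0)\}$, and $Ox(0)=\sum_i\langle r_i\rangle\,Om_i=\sum_i\langle r_i\rangle e_i=\langle r\rangle$ by tidiness. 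Third, I would check $x(0)\in\mathbb{R}^{n_L}_{\geq 0}$ and that the slice meets the positive orthant (it contains $q$'s image-constrained minimizer $p^*$, which Theorem~\ref{thm:corgcp} already guarantees to be a genuine Poisson parameter, hence strictly positive; alternatively one argues directly that $D(\,\cdot\,\|q)$ on a slice meeting $\mathbb{R}^{n_L}_{>0}$ is minimized in the interior), so Theorem~\ref{thm:gac} gives a well-defined limit $x^*$ that is the unique $D(\,\cdot\,\|q)$-minimizer on the slice. Fourth, I would invoke Theorem~\ref{thm:corgcp} to match $x^*=p^*$ and conclude $\po(x^*)=\Pr[l\mid(\langle r\rangle,\po(q))]$. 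Finally, Lemma~\ref{lem:productpoisson} converts ``the equilibrium concentration vector is $x^*$'' into ``a unit aliquot is distributed as $\po(x^*)$.''

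\textbf{Main obstacle.} The genuinely delicate point is the identification $H_{\eproj(O,B)}=\ker O$ over $\mathbb{R}$ together with the fact that the \emph{real} conservation class — not merely a lattice coset — is the full affine slice $\{Ox=\langle r\rangle\}\cap\mathbb{R}^{n_L}_{\geq 0}$; this is where the choice of a $\mathbb{Z}$-basis of the free group $\ker O\cap\mathbb{Z}^{n_L}$ (rather than an arbitrary real basis) does real work, since the span must come out to be all of $\ker O$. A secondary subtlety is ensuring that the relevant affine slice genuinely intersects the open orthant so that Theorem~\ref{thm:gac}'s hypothesis $x(0)\in C(x_0)\cap\mathbb{R}^{n_L}_{\geq 0}$ with $x^*\in\mathbb{R}^{n_L}_{>0}$ is met; this follows from $\langle r\rangle\in\mathbb{R}^{n_R}_{>0}$ and the structure of $O$ (each column of $O$ is nonzero, each $e_i$ is hit by a nonnegative integer combination), and I would spell out just enough of this to rule out the degenerate case where the minimizer sits on the boundary. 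Everything else is a direct citation of the two global theorems plus Lemma~\ref{lem:productpoisson}.
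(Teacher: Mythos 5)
Your proposal is correct and follows essentially the same route as the paper's own proof: identify the conservation class of $\eproj(O,B)$ started at $\sum_i\langle r_i\rangle m_i$ with the slice $\{Ox=\langle r\rangle\}$ (since the reaction vectors span $\ker O$), apply Theorem~\ref{thm:gac} to get the limit $x^*$ as the $D(\cdot\|q)$-minimizer on that slice, match it against the E-projection from Theorem~\ref{thm:corgcp} via Lemma~\ref{thm:poisdiv}, and finish with Lemma~\ref{lem:productpoisson}. You are in fact somewhat more explicit than the paper about why the stoichiometric subspace is all of $\ker O$, why $Ox(0)=\langle r\rangle$ via tidiness, and why the slice meets the open orthant, but these are elaborations of the same argument rather than a different one.
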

\begin{proof}
Note that $\po(x(0)) \in\mathcal{P}_{\langle r\rangle}$. Further the reaction vectors span the kernel of $O$ so we have $x\in C(x(0))\cap\mathbb{R}^{n_L}_{>0}$ iff $\po(x)\in\mathcal{P}_{\langle r\rangle}$. By Theorem~\ref{thm:corgcp}, the distribution ${\Pr[l\mid (r,\po(q))]}$ equals $\po(y)$ for some $y\in\mathbb{R}^{n_L}_{>0}$. Further, it is an E-projection so that, among all Poisson distributions in $\mathcal{P}_{\langle r\rangle}$, the relative entropy $D(\po(y) \| q)$ is minimum. By Lemma~\ref{thm:poisdiv}, the E-projection of $\{\po(x)\mid x\in C(x(0))\cap\mathbb{R}^{n_L}_{>0}\}$ to $\po(q)$ is the Poisson distribution of the E-projection of $C(x(0))\cap\mathbb{R}^{n_L}_{>0}$ to $q$.

By Lemma~\ref{thm:prime1}, the reaction network $\eproj(O,B)$ is prime. Further the reaction system $(\eproj(O,B),k_q)$ is detailed balanced with $q$ a point of detailed balance, by assumption. Hence by Theorem~\ref{thm:gac}, the limit $x^*$ is well-defined and is the E-projection of $C(x(0))\cap\mathbb{R}^{n_L}_{>0}$ to $q$. Together we have ${\Pr[l\mid (r,\po(q))]}=\po(x^*)$. We can  sample from a unit aliquot at equilibrium due to Lemma~\ref{lem:productpoisson}.
\end{proof}

\section{Related Work}\label{sec:rw}
Various schemes have been proposed to perform information processing with reaction networks, for example, \cite{seesawgates,qian2011scaling} which shows how Boolean circuits and perceptrons can be built, \cite{Klavins_2011Biomolecular} which shows how to implement linear input/ output systems, \cite{daniel2013synthetic} exploiting analogies with electronic circuits, \cite{buisman2009computing} for computing algebraic functions, etc. Some of these schemes have even been successfully implemented in vitro. 

Each of these schemes has been inspired by analogy with some existing model of computation. However, reaction networks as a computing platform has some unique opportunities and challenges. It is an inherently distributed and stochastic platform. Noise manifests as leaks in catalyzed reactions. We can tune equilibrium thermodynamic parameters, but kinetic-level control is very difficult. In addition, one needs to keep in mind the tasks that reaction networks are called upon to solve in biology, or might be called upon to solve in technological applications. Keeping these factors in mind, there is value in considering a scheme which attempts to uncover the class of problems that is suggested by the mathematical structure of reaction network dynamics.

In trying to uncover such a class of problems, we have looked to the ideas of Maximum Entropy or MaxEnt~\cite{jaynes1957information} which form a natural bridge between Machine Learning and Reaction Networks. The systematic foundations of statistics based on the minimization of KL-divergence (equivalently, free energy) go back to the pioneering work of Kullback~\cite{kullback1997information}. The conceptual, technical, and computational advantages of this approach have been brought out by subsequent workers~\cite{csiszar2004information,Amari2016,cencov2000statistical}. This work has also been put forward as a mathematical justification of Jaynes' MaxEnt principle. Our hope is that those parts of statistics and machine learning that can be expressed in terms of minimization of free energy should naturally suggest reaction network algorithms for their computation. 

The link between statistics/ machine learning and reaction networks has been explored before by Napp and Adams~\cite{napp2013message}. They propose a deterministic mass-action based reaction network scheme to compute single-variable marginals from a joint distribution given as a factor graph, drawing on ``message-passing'' schemes. Our work is in the same spirit of finding more connections between machine learning and reaction networks, but the nature of the problem we are trying to solve is different. We are trying to estimate a full distribution from partial observations. In doing so, we exploit the inherent stochasticity of reaction networks to represent correlations and do Bayesian inference. 

One previous work which has engaged with stochasticity in reaction networks is by Cardelli et al.~\cite{CardelliMarta2016}. They give a reaction scheme that takes an arbitrary finite probability distribution and encodes it in the stationary distribution of a reaction system. In comparison, we are taking samples from a marginal distribution and encoding the full distribution in terms of the stationary distribution. Thus our scheme allows us to do conditioning and inference.

In Gopalkrishnan~\cite{gopalkrishnan2016scheme}, one of the present authors has proposed a molecular scheme to do Maximum Likelihood Estimation in Log-Linear models. The reaction networks employed in that work are essentially identical to the reaction networks employed in this work, modulo some minor technical differences. In that paper, the reaction networks were used to obtain M-projections (or reverse I-projections), and thereby to solve for Maximum Likelihood Estimators. In this paper, we obtain E-projections, and sample from conditional distributions. The results in that paper were purely at the level of deterministic mass-action kinetics. The results in this paper obtain at the level of stochastic behavior. 

%The schemes solve complementary problems. Here we are dealing with the ``forward model'' task of sampling from a model by using our knowledge of model parameters to obtain a full observation from partial observations, and there we were tackling the `backward model'' task of inferring model parameters from the observed data.

%%%%%%%%%%%%%%%%%%%%%%
%%%%%%%%%%%%%%%%%%%%%%%%%%%%
%%%%%%%%%%%%%%%%%%%
%-------------------------------------------

\section{Discussions}\label{sec:cfw}
We have shown that reaction networks are particularly well-adapted to perform E-projections. In a previous paper~\cite{gopalkrishnan2016scheme}, one of the authors has shown how to perform M-projections with reaction networks. Intuitively, an E-projection corresponds to a ``rationalist'' who interprets observations in light of previous beliefs, and an M-projection corresponds to an ``empiricist'' who forms new beliefs in light of observations.

Not surprisingly, these two complementary operations keep appearing as blocks in various statistical algorithms. Our two schemes should be viewed together as building blocks for implementing more sophisticated statistical algorithms. For example, the \textbf{EM algorithm} works by alternating E and M projections~\cite{Amari2016}. If our two reaction networks are coupled so that the point $q$ is obtained by the scheme in \cite{gopalkrishnan2016scheme}, and the initialization of the scheme in this paper is used to perturb the conservation class for the M-projection correctly, then an ``interior point'' version of the EM algorithm may be possible, though perhaps not with detailed balance but in a ``driven'' manner reminiscent of futile cycles. 

We have illustrated how E-projections might apply to the situation of an artificial cell trying to infer its environment from partial observations. We are acutely aware that our illustration is far from complete. A more sophisticated algorithm would work in an ``online'' fashion, adjusting its estimates on the fly to each new receptor binding event. This certainly appears within the scope of the kind of schemes we have outlined, but more careful design and analysis is necessary before formal theorems in this direction can be shown. Also we think it likely that the schemes that will prove most effective will work neither purely in the regime of the first scheme, nor purely in the regime of the second scheme, but somewhere in between. How long a time window they average over, and how large a volume is optimal, and how these choices tradeoff between sensitivity and reliability, these are questions for further analysis.

One glaring gap in our narrative is that we require the internal species $X_i$ to be as numerous as the outside ligands $L_i$. A much more efficient encoding of ligand population vectors should be possible, drawing on ideas from graphical models, so that the number of representing species need only be a logarithm of the number of ligands being represented. Moreover it may be possible to perform E and M projections directly on these graphical model representations.

Our constructions and results of Section~\ref{sec:scheme} were carried out for arbitrary matrices with rational number entries. We only used the assumption of ``tidy'' matrices to set initial conditions in Theorems~\ref{thm:canonical}, \ref{thm:grandcanonical}. If some other method of setting initial conditions correctly is available, for example by performing matrix inversions with a reaction network, then the technical condition of tidy matrices can be dropped. In defence of the assumption that our observation matrices are tidy, it is not inconceivable that through evolution a biological cell would have evolved its receptors so that the affinity matrix allows for simple meaningful messages to be transmitted inside the cell. 

Note that the mathematics does not require the restriction of the affinities $o_{ij}$ to \textbf{nonnegative} rational numbers. We could have admitted negative numbers, and all our results would go through.

\bibliographystyle{plain}
\bibliography{sample,eventsystems}
%\begin{thebibliography}{00}
%\end{thebibliography}

\end{document}